\documentclass[11pt]{article}
\usepackage[margin=1in]{geometry}
\usepackage[T1]{fontenc}
\usepackage[utf8]{inputenc}
\usepackage{lmodern}
\usepackage{mathtools,amssymb,bm,amsthm,mathrsfs}
\newtheorem{proposition}{Proposition}[section]
\newtheorem{lemma}[proposition]{Lemma}
\newtheorem{theorem}[proposition]{Theorem}
\newtheorem{corollary}[proposition]{Corollary}
\usepackage{booktabs,array}
\usepackage{microtype}
\usepackage{enumitem}
\usepackage{xcolor}
\usepackage{graphicx}
\usepackage{tikz}
\usetikzlibrary{arrows.meta,positioning,shapes.geometric,fit,calc,decorations.pathreplacing}
\usepackage[hidelinks]{hyperref}
\usepackage[nameinlink,noabbrev]{cleveref}

\newcommand{\R}{\mathbb{R}}
\newcommand{\dd}{\,\mathrm{d}}
\newcommand{\vct}[1]{\bm{#1}}

\newcommand{\tr}{\operatorname{tr}}
\newcommand{\dev}{\operatorname{dev}}

\newcommand{\sym}{\operatorname{sym}}

\newcommand{\Conf}{\mathrm{conf}}

\newcommand{\ec}{\mathrm{ec}}

\renewcommand{\arraystretch}{1.14}

\begin{document}

\title{From a three-fluid junction to Young's law:\\
Constitutive and rigid-support limits of wetting}
\author{Dieter Bothe\\[0.4em]
\small Fachbereich Mathematik and Profile Topic Thermo-Fluids and Interfacial Phenomena,\\
\small Technische Universit\"at Darmstadt, Peter-Gr\"unberg-Str.~10, 64287 Darmstadt, Germany}
\date{15 September 2026}

\maketitle
\begin{abstract}
Young's law on a rigid solid and the Neumann balance at a three-fluid junction describe different admissible motions. We connect them through an ordered continuum construction. An affine viscoelastic third phase first acquires permanent deformation memory in the no-relaxation limit at fixed modulus. Independently specified material-surface laws distinguish surface free energy from mechanical surface stress. On the deformable support, displacement of solid material and migration of the wet--dry partition give separate equilibrium conditions. Stiffening an anchored support suppresses displacement at a fixed macroscopic observation distance but leaves reversible migration possible. The surviving variation gives Young's law; the constrained mechanics remains as a reaction carrying normal and, in general, tangential capillary loads. Two complementary results substantiate this reduction. A clamped finite-strain model with circular liquid interfaces yields uniform convergence of reduced energies and recovery of the spatial Young angle for global almost minimisers. A separate quadratic half-space model gives convergence of stationary migration derivatives and a nonzero limiting traction distribution. The results explain how a local Neumann-like ridge can coexist with a Young angle measured against the distant wall, while distinguishing geometric convergence from the control of migration and reaction forces.
\end{abstract}

\noindent\textbf{Keywords:} wetting; elastocapillarity; surface thermodynamics; configurational forces; viscoelasticity; singular limits.

\section{Introduction}
\label{sec:intro}

For a liquid $L$ on a homogeneous rigid solid $S$ in a gas $G$, Young's law \cite{Young1805} reads
\begin{equation}
\gamma_{SG}-\gamma_{SL}=\gamma_{LG}\cos\theta_Y.
\label{eq:young-intro}
\end{equation}
Here $\gamma_{SG}$ and $\gamma_{SL}$ are the equilibrium solid--gas and solid--liquid surface potentials per current area, $\gamma_{LG}$ is the liquid--gas tension, and $\theta_Y$ is the equilibrium angle measured through the liquid \cite{Young1805}. At a freely deformable three-fluid junction, the corresponding capillary construction is the vector Neumann balance. Their connection requires identifying how the forces and admissible motions change when one phase acquires an elastic constitutive response and the resulting solid is made rigid.

A deformable support makes explicit the mechanics suppressed by the prescribed geometry of a rigid wall. Capillary loading creates a wetting ridge, a localised deformation of the solid surface at the contact line, whose geometry reflects solid surface stresses \cite{Lester1961,Jerison2011}. Those stresses need not equal the solid surface free energies entering Young's law. Moreover, moving solid material and advancing the boundary between its wet and dry regions are different operations. A rigid constraint suppresses the first but can leave the second admissible. The capillary load survives through the support reaction, while migration selects the equilibrium contact angle.

The distinction between spatial and material variations is established. Olives derives separate conditions for a line fixed on a deformable body and for motion relative to that body \cite{Olives2010}; related mechanical and configurational conditions occur in membrane and rigid-substrate theories \cite{HuiJagota2015,LiuEtAl2020Membrane,FriedJabbour2012}. Snoeijer, Rolley and Andreotti identify the kinematic linkage of spatial and material line motion on a rigid substrate \cite{SnoeijerRolleyAndreotti2018}. Nonlinear ridge theory further clarifies the roles of surface elasticity, weak elastic singularities and pinning \cite{Pandey2020}. The affine viscoelastic--elastic correspondence\footnote{Affine transport means that the stored configuration is stretched and rotated by the same local deformation gradient as the surrounding material.} is also known independently of wetting \cite{SnoeijerEtAl2020Elasticity}.

Here we organise these ingredients into a route from a three-fluid junction through an affine viscoelastic third phase and an elastic support to the rigid-wall problem (Figure~\ref{fig:genealogy}).
The bulk constitutive extension, the surface laws and the external anchoring are separate modelling inputs. Thus this is an ordered construction, not a single-parameter limit of one parent bulk--surface functional or a model of chemical solidification. In particular, the final solid surface energies are specified independently; the initial three-fluid tensions and the bulk relaxation time do not determine the Young angle.

The new results control two distinct aspects of the stiff-support limit: recovery of the spatial Young angle from global minimality, and convergence of stationary migration derivatives together with the retained reaction. A reduced energy depends on the wet--dry partition after the other fields have been eliminated; the finite-strain result uses their energy infimum. Global almost minimisers have energy within a vanishing tolerance of the global infimum. Appendix~\ref{sec:nonlinear-continuum} proves uniform convergence of reduced energies and Young-angle recovery for global almost minimisers in a clamped finite-strain model with circular liquid interfaces. Appendix~\ref{sec:quadratic-continuum} solves a separate quadratic half-space model, controlling stationary migration derivatives and the limiting traction distribution. The main text explains the physical reduction and the assumptions of each model; the appendices give the detailed estimates, calculations and proofs. The two continuum results have complementary scopes and are not joined by an approximation theorem.

We consider isothermal reversible partial wetting of a non-swelling solid, with local material-surface energies, specified homogeneous equilibrium surface states, fixed liquid volume, and no retained line energy or pinning barrier. The support is anchored independently of the migrating partition. The limiting angle is measured against the prescribed wall away from the shrinking deformation region. We call that reference geometry the \emph{outer wall}; the observation length relative to the deformation region is specified in Section~\ref{sec:stiff}. Section~\ref{sec:solid-surface} specifies the surface-state thermodynamics, and Section~\ref{sec:landscape} compares neighbouring support and force-transmission models. Technical derivations are collected in Appendix~\ref{sec:technical-details}.

\begin{figure}[tbp]
\centering
\resizebox{0.98\textwidth}{!}{%
\begin{tikzpicture}[
  >=Latex,
  font=\scriptsize,
  state/.style={draw,rounded corners,align=center,minimum height=11mm,inner sep=2.0mm},
  flow/.style={->,line width=0.65pt}
]
\node[font=\bfseries,anchor=east] at (-2.15,2.00) {bulk};
\node[state,minimum width=27mm] (f3) at (0,2.00) {three simple\\fluid bulks};
\node[state,minimum width=31mm] (ve) at (4.15,2.00) {affine viscoelastic\\third phase};
\node[state,minimum width=28mm] (es) at (8.25,2.00) {elastic bulk\\solid $S$};
\node[state,minimum width=29mm] (rig) at (12.25,2.00) {stiff / rigid\\bulk support};
\draw[flow] (f3.east)--(ve.west);
\draw[flow] (ve.east)--(es.west);
\draw[flow] (es.east)--(rig.west);
\node[align=center] at ($(f3.east)!0.5!(ve.west)+(0,1.02)$) {bulk state-space\\extension};
\node[align=center] at ($(ve.east)!0.5!(es.west)+(0,1.02)$) {no relaxation\\fixed modulus};
\node[align=center] at ($(es.east)!0.5!(rig.west)+(0,1.02)$) {stiffening\\fixed reference geometry};

\node[font=\bfseries,anchor=east] at (-2.15,-0.65) {surface};
\node[state,minimum width=34mm] (fi) at (0,-0.65) {fluid-like\\surface laws};
\node[state,minimum width=42mm] (ss) at (8.25,-0.65) {material solid\\surface laws};
\node[state,minimum width=32mm] (eq) at (12.25,-0.65) {fixed-wall\\surface states};
\draw[flow] (fi.east)--(ss.west);
\draw[flow] (ss.east)--(eq.west);
\node[align=center] at ($(fi.east)!0.5!(ss.west)+(0,0.96)$) {independent surface constitutive extension};
\node[align=center] at ($(ss.east)!0.5!(eq.west)+(0,0.96)$) {constraint reduction};

\node[font=\bfseries,anchor=east] at (-2.15,-2.85) {balance};
\node[align=center] at (0,-2.85) {Neumann junction:\\spatial force balance};
\node[align=center] at (8.25,-2.85) {deformable solid:\\spatial and material\\variations};
\node[align=center] at (12.25,-2.85) {rigid-wall model:\\support reaction\\and Young};
\end{tikzpicture}%
}
\caption{Two coordinated constitutive choices. The bulk route introduces permanent deformation memory and then increases stiffness under fixed anchoring. The surface route independently specifies the material surface laws and their prescribed-wall states. Horizontal alignment indicates where the two descriptions are used together; the bulk limit does not determine the surface law.}
\label{fig:genealogy}
\end{figure}

\section{The three-fluid starting point}
\label{sec:threefluid}

Let three bulk regions $\Omega_i$, $i=1,2,3$, meet along a smooth contact line $\Gamma$, with pairwise interfaces $\Sigma_{ij}$. At the line, $\vct N_{ij}$ denotes the unit co-normal tangent to the interface, normal to $\Gamma$, and pointing \emph{from the line into the interface sheet}. A tensile interface therefore pulls the line in the direction $+\vct N_{ij}$. This convention is used throughout.

For a simple isotropic fluid interface at equilibrium,
\begin{equation}
\vct S^{\Sigma_{ij}}=\gamma_{ij}\vct P_{ij},
\qquad \vct P_{ij}=\vct I-\vct n_{ij}\otimes\vct n_{ij}.
\label{eq:fluid-surface-stress}
\end{equation}
Here $\vct S^{\Sigma_{ij}}$ is the mechanical surface-stress tensor, $\vct n_{ij}$ is a unit interface normal, and $\vct P_{ij}$ projects onto the interface tangent plane. The scalar $\gamma_{ij}$ is the equilibrium area-conversion potential, meaning the potential conjugate to reversible area change at the imposed thermodynamic controls, as well as the isotropic surface stress.\footnote{Bold symbols denote vectors or tensors; $\vct I$ is the identity tensor. Superscript $\mathsf T$ denotes transpose and $\otimes$ the tensor product. For a matrix $\vct M$, we use $\tr$ for trace, $\sym\vct M=(\vct M+\vct M^{\mathsf T})/2$, and $\dev\vct M=\vct M-(\tr\vct M)\vct I/d$ in $d$ spatial dimensions. A colon denotes the tensor inner product.} With no retained line force or line energy, and no concentrated bulk force at the junction, line equilibrium gives
\begin{equation}
\gamma_{12}\vct N_{12}+\gamma_{23}\vct N_{23}
+\gamma_{31}\vct N_{31}=\vct0.
\label{eq:neumann}
\end{equation}
A nondegenerate Neumann junction requires each positive tension to be smaller than the sum of the other two, the strict triangle inequalities. The bulk--surface--line balance laws are discussed in Refs.~\cite{BothePruss2016,SlatterySagisOh2007}; only their reversible equilibrium specialisation is needed here.

The same condition follows from varying the interfacial energy $\mathcal E_\Sigma=\sum_{i<j}\gamma_{ij}|\Sigma_{ij}|$, where $|\Sigma_{ij}|$ denotes current area. A spatial line displacement $\delta\vct x_\Gamma$ produces
\begin{equation}
\delta\mathcal E_\Sigma\big|_\Gamma
=-\left(\sum_{i<j}\gamma_{ij}\vct N_{ij}\right)
\cdot\delta\vct x_\Gamma
\label{eq:neumann-variation}
\end{equation}
per unit line length, after the interior constrained equilibrium equations have been imposed. Thus one spatial variation gives the mechanical Neumann balance. Once a material solid surface is introduced, the energy also distinguishes deformation of existing surface material from conversion between wet and dry states.

\section{Permanent bulk memory: the constitutive solid limit}
\label{sec:viscoelastic-solid}

A Newtonian stress $-p\vct I+2\eta\vct D$, with pressure $p$, viscosity $\eta$ and symmetric velocity gradient $\vct D$, stores no static shear energy. Sending $\eta$ to infinity may retain a constraint reaction through a vanishing strain rate, but it does not generate an elastic reference response. We instead give phase $3$ a symmetric positive-definite conformation tensor $\vct A$, representing its molecular configuration and transported affinely by the bulk motion. In the incompressible, isothermal Oldroyd-B class,
\begin{align}
\vct T_3&=-p_3\vct I+2\eta_s\vct D_3+G_S(\vct A-\vct I),
\label{eq:oldroyd-stress}\\
\overset{\triangledown}{\vct A}
&:=\frac{D\vct A}{Dt}-\vct L\vct A-\vct A\vct L^{\mathsf T}
=-\lambda^{-1}(\vct A-\vct I),
\qquad L_{ij}=\partial_j(v_3)_i.
\label{eq:oldroyd-A}
\end{align}
Here $\vct T_3$ is the bulk Cauchy stress, $p_3$ the incompressibility multiplier, $\vct v_3$ the material velocity, $\vct L$ its spatial gradient, and $\vct D_3=\sym\vct L$. The derivative $D/Dt=\partial_t+\vct v_3\cdot\nabla$ follows a material trajectory, where $t$ is time, $\nabla$ is the current spatial gradient and $\partial_j$ differentiates with respect to its $j$th spatial coordinate. The constants $G_S>0$, $\lambda>0$ and $\eta_s\ge0$ are the elastic modulus, relaxation time and parallel viscosity; the polymeric viscosity is $\eta_p=G_S\lambda$. Constituent $3$ undergoes no phase transfer across its interfaces in this passage. A broader two-fluid thermodynamic formulation is given in Ref.~\cite{Bothe2026Polymer}; the present argument begins with the restricted constitutive equations above.

For the same material motion, let $\vct F$ be the deformation gradient from reference to current coordinates. It obeys $D\vct F/Dt=\vct L\vct F$. Hence the left Cauchy--Green tensor $\vct B=\vct F\vct F^{\mathsf T}$ satisfies $\overset{\triangledown}{\vct B}=\vct0$. Take
\begin{equation}
\lambda\to\infty,
\qquad G_S=\eta_p/\lambda=\text{fixed}>0.
\label{eq:lambda-limit}
\end{equation}
The limiting conformation is no longer relaxed. With matched stress-free data, $\vct A(0)=\vct B(0)=\vct I$, uniqueness of the transport equation gives
\begin{equation}
\vct A=\vct B,
\qquad
\vct T_3=-p_3\vct I+2\eta_s\vct D_3+G_S(\vct B-\vct I).
\label{eq:A-equals-B}
\end{equation}
Its elastic deviatoric part is $G_S\dev\vct B$. The isotropic difference from the usual incompressible neo-Hookean stress is absorbed into pressure. With $\eta_s>0$, a viscous contribution remains in parallel with the elastic response \cite{SnoeijerEtAl2020Elasticity}. Holding $\eta_p$ fixed instead would give $G_S\to0$ and lose this finite-modulus solid.

Material labels already exist for a fluid, but a simple-fluid stress law does not retain static shear relative to those labels. The no-relaxation limit makes that reference information mechanically observable: the conformation stores deformation indefinitely, and the stress responds to it even after motion ceases. Only phase $3$ undergoes this constitutive change; phases $1=L$ and $2=G$ retain their fluid laws. The junction geometry and balance laws remain in place, while the state variables and constitutive response of the third phase change.

The conformation energy density $W_p$ and its limiting neo-Hookean density $W_{NH}$ confirm the correspondence:
\begin{equation}
W_p(\vct A)=\frac{G_S}{2}
\left(\tr\vct A-\ln\det\vct A-3\right)
\ \longrightarrow\
W_{NH}(\vct B)=\frac{G_S}{2}(\tr\vct B-3),
\label{eq:oldroyd-energy}
\end{equation}
because $\det\vct B=1$ for matched incompressible memory. The logarithmic term is needed on the full positive-definite conformation space at finite $\lambda$ \cite{BoyavalEtAl2009}. Appendix~\ref{sec:transport-estimate} gives the dissipation identity and an error estimate bounded by a constant times $T/\lambda$ along a prescribed common motion on an observation interval of fixed duration $T$. General initial memory takes the form $\vct A=\vct F\vct D_0\vct F^{\mathsf T}$, where $\vct D_0$ is a time-independent referential tensor field fixed by the initial state; a compatible stress-free reference is an additional property of $\vct D_0$, not a consequence of its positivity.

Mechanical stretching supplies both stored conformation energy and relaxation losses. The energy balance separates two dissipative mechanisms. Relaxation reduces the conformation energy toward its isotropic minimum, whereas the parallel viscous stress dissipates energy whenever the solid moves. At fixed modulus, the former contribution vanishes in the no-relaxation limit on a bounded observation interval with bounded conformation and inverse conformation. The latter can remain and damp motion about an elastic equilibrium. Thus permanent elastic memory is compatible with viscous damping; removing relaxation does not require removing every source of dissipation. The exact nonnegative dissipation densities are derived in Appendix~\ref{sec:transport-estimate}.

The order relative to long time matters. At finite $\lambda$, a quiescent conformation relaxes to $\vct I$. With $\lambda=\infty$ imposed first, deformation memory persists and can support an elastic equilibrium. We study stationarity of that elastic problem; we do not assume a uniform-in-time limit of the original relaxing liquid. Without damping, dynamical convergence to a static state is not automatic.

The correspondence also depends on the constitutive transport. The Deborah number is the ratio of relaxation time to a chosen observation time. A large value means that relaxation is slow on that interval, but it does not by itself establish a permanent elastic reference. Moreover, objective transport laws, which are invariant under a change of observer, need not describe the same deformation memory. The non-affine Gordon--Schowalter/Johnson--Segalman class permits the stored configuration to deform differently from the bulk. Away from its upper- and lower-convected endpoints, its no-relaxation transport generally does not determine a unique elastic state from the total deformation alone \cite{SnoeijerEtAl2020Elasticity}. Frame indifference alone does not determine a transport law independently of the chosen state variable, free energy and stress. The upper-convected endpoint used here transports conformation with the bulk deformation. The lower-convected endpoint instead has an inverse-deformation correspondence and does not describe the same physical affine transport. Our construction relies specifically on \eqref{eq:oldroyd-A}.

These observations fix the order of the construction. Increasing the modulus while retaining finite relaxation can produce a very stiff response over short times and still allow stress relaxation over long times. We first select the permanent-memory elastic model and then use its modulus to suppress capillary deformation. The bulk construction leaves the surface constitutive laws to be specified independently.

\section{Solid surface energy and surface stress}
\label{sec:solid-surface}

Let $\mathcal B_S$ be the reference solid body, with placement $\vct\chi_S$ mapping reference points to their current positions, and material surface $\mathcal S=\partial\mathcal B_S$. A coherently bonded surface has deformation gradient $\vct F_s$, the tangential restriction of the bulk gradient, and current-to-reference area ratio $J_s$. Its local thermodynamic potential per reference area is $\Psi_{S\alpha}(\vct F_s,\vct q_{S\alpha})$, where $\alpha=L,G$ and $\vct q_{S\alpha}$ denotes any additional specified or equilibrated surface state. At fixed thermodynamic controls,
\begin{equation}
\gamma_{S\alpha}=\frac{\Psi_{S\alpha}}{J_s},
\qquad
\vct P_{s,S\alpha}=\frac{\partial\Psi_{S\alpha}}{\partial\vct F_s},
\qquad
\vct\Upsilon_{S\alpha}=J_s^{-1}\vct P_{s,S\alpha}\vct F_s^{\mathsf T}.
\label{eq:surface-piola-pushforward}
\end{equation}
The scalar $\gamma_{S\alpha}$ is the equilibrium area-conversion potential, $\vct P_{s,S\alpha}$ the surface first Piola stress, and $\vct\Upsilon_{S\alpha}$ the current mechanical surface stress. Their distinction is the Shuttleworth effect \cite{Shuttleworth1950,GurtinMurdoch1975,AndreottiSnoeijer2016}. Appendix~\ref{sec:surface-stress-exact} gives the finite-strain derivative with explicit measures.

The special law $\Psi_{S\alpha}=\gamma_{S\alpha}J_s$, with strain-independent $\gamma_{S\alpha}$, gives $\vct\Upsilon_{S\alpha}=\gamma_{S\alpha}\vct P_{\Sigma_s}$, where $\vct P_{\Sigma_s}$ is the current surface tangent projector. Thus a liquid-like surface response is included as a constitutive choice.

The physical distinction can be expressed through two operations on a marked patch of solid surface. First stretch the patch while keeping the adjacent fluid unchanged. Its material points remain in the same interfacial state, but their deformation and current area change. The mechanical surface stress measures the work of this operation. Now hold the patch at a prescribed deformation and change the adjacent phase from gas to liquid. This changes its wetting state and compares the two surface potentials at that deformation. The energy difference governing this conversion need not equal the difference between the stresses required to stretch the two states. Both quantities derive from the same constitutive potentials, but they describe different variations of those potentials \cite{HuiJagota2013,Makkonen2014}.

The area measure matters for the same reason. A reference-area potential tracks energy for a fixed amount of surface material, whereas a current-area potential measures energy per unit of the deformed area. Their conversion uses the local area ratio. In general, mechanical stress is obtained by differentiating the specified potential with respect to deformation; it cannot be read directly from either energy density. For a simple fluid interface with strain-independent equilibrium potential, the same scalar both assigns the cost of area and supplies isotropic surface stress. A material solid surface retains a strain reference and may store stretch and shear energy. Even an initially isotropic surface can respond anisotropically after prestrain \cite{HeydenEtAl2021}.

The surface constitutive law must therefore be specified alongside the bulk law. This independence concerns the stored energy and its material parameters. A coherently bonded surface still moves with the boundary of the bulk solid: its deformation is inherited from that placement, and it does not form an independently slipping layer. The permanent bulk memory provides the material reference beneath the interface; it does not determine how wetting or stretching that interface changes its free energy. The finite-strain derivative, with its reference and current measures, is given in Appendix~\ref{sec:surface-stress-exact}.

The equilibrium surface potentials also depend on what is held fixed during conversion. The thermodynamic ensemble specifies these imposed controls and conservation constraints. If adsorbed species exchange with reservoirs, the relevant potential includes the chemical work of that exchange; this is the surface grand potential. If total adsorbed amounts are conserved instead, their conservation remains among the equilibrium constraints. Stretching a surface at fixed total adsorbed amount changes its concentration, so its Helmholtz energy per current area is generally different from the potential conjugate to area change. The underlying solid material itself is not an exchangeable adsorbate. The mathematical reduction is given in Appendix~\ref{sec:technical-details}.

These controls determine the wet and dry equilibrium states whose potentials enter Young's law. Temperature, adsorption, pressure and prescribed surface strain can therefore change the equilibrium angle on a chemically homogeneous wall \cite{Rey2004,WardWu2007,GhasemiWard2010,WardSefiane2010}. Such a shift changes the equilibrium target. It must be distinguished from a moving line that has not reached that target, or a line trapped by a metastable barrier at unchanged controls. The present reversible calculation presumes equilibrium surface states and unrestricted migration; it supplies neither adsorption relaxation kinetics nor a law for contact-line drag or pinning.

On a rigid wall of fixed total area, reversible conversion depends only on the wet--dry energy difference. Adding a common constant to both potentials leaves that specific variation unchanged. Deformation or changes of surface state still require the full constitutive potentials and their derivatives.

An alternative description treats an interface as an isotropic surface phase and expresses its stress through a scalar surface pressure. This is an interfacial stress with units of force per length, distinct from bulk pressure. In the convention used by interface-formation theory, the equilibrium surface pressure is the negative of the tension. That theory also assigns state variables, such as surface density, and relaxation dynamics to the interfaces \cite{Shikhmurzaev2007}. The pressure language therefore belongs to a specified constitutive model. Within this fluid-like equilibrium description, rewriting tensions as negative surface pressures leaves Young's relation unchanged.

Two distinctions are necessary. Adsorption thermodynamics also uses ``surface pressure'' to mean a reduction from a specified reference tension; that convention differs from the negative-tension convention. Moreover, a scalar surface pressure cannot represent general anisotropic solid surface stress or its strain dependence. The present material-surface formulation consequently retains separate equilibrium potentials and mechanical surface stresses, while including a liquid-like isotropic response as a special constitutive choice.

In what follows the surface geometry remains bonded to the solid, while its wet and dry constitutive states may differ. We retain local first-gradient surface energies, which depend on local surface deformation but not on curvature or spatial gradients of that deformation. Bending, surface-gradient and independent line energies are omitted.

\section{Two variations on a deformable support}
\label{sec:deformable}

Partition the reference surface into wet and dry regions $\mathcal S_{SL}$ and $\mathcal S_{SG}$, separated by a material-coordinate boundary $\mathcal C$. The current line is $\Gamma=\vct\chi_S(\mathcal C)$. Let $\vct X$ denote reference position. With bulk deformation gradient $\vct F_S=\nabla_{\!\vct X}\vct\chi_S$ and stored energy density $W_S$ per reference volume, a representative constrained equilibrium energy is
\begin{equation}
\begin{aligned}
\mathcal E={}&\int_{\mathcal B_S}W_S(\vct F_S)\dd V_0
+\int_{\mathcal S_{SL}}\Psi_{SL}\dd A_0
+\int_{\mathcal S_{SG}}\Psi_{SG}\dd A_0
+\gamma_{LG}|\Sigma_{LG}|.
\end{aligned}
\label{eq:deformable-energy}
\end{equation}
The measures $dV_0$ and $dA_0$ refer to reference volume and area; $|\Sigma_{LG}|$ is the current liquid--gas area. Solid incompressibility and fixed liquid volume are imposed as constraints. A nontrivial remote boundary portion, disjoint from the migrating region, has prescribed placement independent of $\mathcal C$. This anchoring removes rigid translations and rotations. Its reaction can transmit force while doing zero virtual work. Different loading ensembles require their own external potential.

\subsection{Material displacement and wet--dry migration}

First hold $\mathcal C$ fixed and vary $\vct\chi_S$. This ordinary spatial variation gives bulk elastic equilibrium, surface traction conditions and the mechanical line balance. The solid surface contributions are generated by $\vct\Upsilon$, with any surviving bulk resultant retained. A contour form and its singularity assumptions are given in Appendix~\ref{sec:technical-details}.

Second change the material partition. On a smooth surface with the same boundary value of the deformation gradient on the wet and dry sides, the placement can be held fixed while a strip changes from $SG$ to $SL$. If $\eta$ is the current co-normal advance, positive toward increasing wet area, define its conjugate force by
\begin{equation}
\delta_a\mathcal E=-\int_\Gamma f_\Gamma^{\Conf}\eta\dd\ell.
\label{eq:config-force}
\end{equation}
Here $\delta_a$ denotes variation of the partition, $d\ell$ is current line length, and the superscript $\Conf$ identifies the configurational force, meaning the force conjugate to partition migration. The same energy therefore produces two equilibrium conditions, associated with two different variations.

The distinction between a material point and the geometric contact line is essential here. A marked solid point has a persistent identity and follows the solid placement. The wet--dry boundary identifies which marked points meet liquid and which meet gas. As the boundary advances, successive points change their adjacent phase, while the points themselves may remain fixed in space. A boundary described in material coordinates is therefore not necessarily attached permanently to the same material points. Nor must the geometric intersection be material relative to every adjoining constituent. For the solid, reversible wetting is migration of a boundary between interfacial states; it does not require transport of solid mass along the wall. This description neither assigns stored mass to the contact line nor presumes a first-order phase transition within the two-dimensional surface.

In an actual sequence of equilibria, migration generally changes the elastic deformation and the liquid shape together. Separating their virtual variations identifies which equilibrium conditions belong to material displacement and which belong to changes of the partition. If displacement, interface shape and internal variables are equilibrated at each partition position, eliminating them gives a reduced energy whose derivative is the migration condition.

That derivative must respect the constraints used in elimination. At equilibrium, contributions arising solely from changes of the eliminated fields cancel in the full constrained first variation. The bulk elastic energy may change along the equilibrium sequence, but its induced change must not be added again to a migration derivative that already includes re-equilibration. Explicit contributions can remain when migration changes a constitutive assignment, transports an elastic defect (a localised incompatibility of the reference deformation) or alters a constraint \cite{Olives2010,HuiJagota2015,LiuEtAl2020Membrane,SnoeijerRolleyAndreotti2018}. Relabeling wet and dry patches on an otherwise smooth homogeneous bulk does not itself create an additional bulk energy-release force. The existence of a ridge alone likewise establishes none of these additional terms.

The localised mechanical force also depends on the chosen description. Ideal dividing-surface, resolved continuum and molecular models need not assign the same separate contributions to bulk and surface forces, even when their complete force balances are consistent \cite{Olives2010,Pandey2020,LiangEtAl2018}. Appendix~\ref{sec:technical-details} gives the constrained envelope identity and the estimates needed to distinguish mechanical and configurational contributions from singular bulk fields. At a ridge, the one-sided attachment conditions determine which simultaneous endpoint and material variations are admissible, as considered next.

\subsection{Compatibility at a ridge}
\label{sec:corner-benchmark}

Work in a plane section per unit unchanged line length. Let $X$ be reference surface arclength, with wet material $X<a_0$ and dry material $X>a_0$. The one-sided surface placements $\vct r_{S\alpha}(X)$ meet at the current endpoint $\vct r_\Gamma$, and
\begin{equation}
\partial_X\vct r_{S\alpha}=\lambda_{S\alpha}\vct t_{S\alpha},
\qquad
\vct N_{SL}=-\vct t_{SL},\quad \vct N_{SG}=\vct t_{SG},
\label{eq:corner-orientation}
\end{equation}
where $\lambda_{S\alpha}>0$ is surface stretch and both unit tangents $\vct t_{S\alpha}$ point toward increasing $X$. Differentiating the attachment condition gives
\begin{equation}
\delta\vct r_{S\alpha}(a_0)
=\delta\vct r_\Gamma
-\lambda_{S\alpha}\vct t_{S\alpha}\delta a_0.
\label{eq:corner-compatibility}
\end{equation}
Thus independently changing the spatial endpoint and its material label requires compatible one-sided placement variations \cite{Olives2010}.

For surface energy $\psi_i(\lambda_i)=\lambda_i\gamma_i(\lambda_i)$ per reference area, with $i=SL,SG$ and primes denoting stretch derivatives, define the tensile stress $\Upsilon_i$ and material surface potential $H_i$ by
\begin{equation}
\Upsilon_i=\psi_i',\qquad
H_i=\psi_i-\lambda_i\psi_i'=-\lambda_i^2\gamma_i'.
\label{eq:corner-conjugates}
\end{equation}
Integration by parts and \eqref{eq:corner-compatibility} give the surface boundary variation
\begin{equation}
\begin{aligned}
\delta\mathcal E_\Sigma\big|_\Gamma
={}&-\left(\Upsilon_{SL}\vct N_{SL}
+\Upsilon_{SG}\vct N_{SG}+\gamma_{LG}\vct N_{LG}\right)
\cdot\delta\vct r_\Gamma\\
&+(H_{SL}-H_{SG})\delta a_0.
\end{aligned}
\label{eq:corner-first-variation}
\end{equation}
If the remaining line-localized bulk/loading variation is written
$-\vct R_b\cdot\delta\vct r_\Gamma-\mathcal G_b\delta a_0$, total stationarity requires
\begin{align}
\Upsilon_{SL}\vct N_{SL}+\Upsilon_{SG}\vct N_{SG}
+\gamma_{LG}\vct N_{LG}+\vct R_b&=\vct0,
\label{eq:corner-mechanical}\\
H_{SL}-H_{SG}-\mathcal G_b&=0.
\label{eq:corner-material}
\end{align}
The vector $\vct R_b$ is the bulk/loading mechanical resultant and $\mathcal G_b$ the configurational driving force toward increasing $a_0$, with signs fixed by their displayed work pairing. These are the mechanical and material conditions in the same coordinates. The resultants must be evaluated from the selected bulk problem. For a homogeneous unpinned ridge with sufficiently weak singularities, $\mathcal G_b=0$ and
\begin{equation}
\lambda_{SL}^2\gamma_{SL}'=\lambda_{SG}^2\gamma_{SG}'.
\label{eq:corner-nopinning}
\end{equation}
The quantities $H_i$ have units of force per length and concern conversion of surface material; they are distinct from chemical potentials for exchanging adsorbate. This is the established surface material-potential condition \cite{SnoeijerRolleyAndreotti2018,Pandey2020}. It does not require equal one-sided stretches. The smooth-surface measure conversion, the corner boundary calculation and the conditions for vanishing bulk resultants are detailed in Appendix~\ref{sec:technical-details}.

\section{Stiffening the supported solid}
\label{sec:stiff}

The two limits serve different purposes. First take $T_*/\lambda\to0$ at fixed $G_S>0$, where $T_*$ is a fixed observation time, to establish permanent bulk memory. Then increase $G_S$ at fixed anchoring and fixed surface constitutive data. Choose a fixed macroscopic reference length $L$ of the boundary-value problem, such as a reference drop radius or a support dimension, independent of $G_S$. The dimensionless elastocapillary ratio is
\begin{equation}
\epsilon_{\ec}=\frac{\Gamma_*}{G_SL}\to0.
\label{eq:two-small-parameters}
\end{equation}
Here $\Gamma_*$ is a fixed surface-stress scale, with units of force per length, drawn from the liquid tension, solid surface stresses evaluated at the prescribed outer-wall state and fixed surface elastic coefficients. It is not an assumed uniform supremum of the stress at a ridge tip. The corresponding deformation length is $\ell_{\ec}=\Gamma_*/G_S$. In this reduction, \emph{the outer scale} means distances comparable with the chosen $L$ and large compared with $\ell_{\ec}$; the near-line region has distances comparable with $\ell_{\ec}$. Different support models can have additional lengths, as discussed in Section~\ref{sec:landscape}.

Rigidity concerns the response of the equilibrium family, rather than the absence of motion in any one equilibrium. A compliant solid at rest already has zero rate of deformation; it can nevertheless sustain a finite, statically deformed wetting ridge. Incompressibility likewise preserves volume while allowing shear and changes of shape. The required rigid-support passage must therefore control the capillary-induced displacement and strain relative to the prescribed placement. A condition on velocity or strain rate alone cannot establish this passage. The distinction is particularly important after the no-relaxation limit, since the stored deformation remains present even when all velocities vanish.

The small elastocapillary ratio also admits physically different interpretations. Increasing the solid modulus while keeping the body, capillary data and anchoring fixed changes the mechanical response of a given boundary-value problem. Observing a fixed compliant support over distances much larger than its elastocapillary length can instead make its localised deformation small relative to the observation length. The second procedure need not reduce the dimensional ridge displacement or alter its local surface stretches. Changing the drop size or support dimensions introduces further changes to the loading and geometry. The stiffening construction considered here keeps these choices fixed; a similar macroscopic appearance does not by itself identify the same limiting problem.

At the outer scale just defined, the usual capillary scaling gives displacement divided by $L$, and strain increments, of order $\epsilon_{\ec}$, up to geometry-dependent logarithmic factors. Their product with $G_S$ need not vanish. The outer wall can therefore become undeformed while transmitting a finite load. For a prescribed prestretch, these are increments about that placement; its baseline stress may itself grow with $G_S$.

The spatial resolution of the contact angle must be specified for the same reason. At a fixed distance from the line, increasing stiffness moves the observation point progressively farther from the ridge when distance is measured in elastocapillary lengths. The liquid interface can then approach the Young angle against the prescribed wall even if the ridge retains nontrivial tangents when examined within its shrinking neighbourhood. An angle measured against either ridge flank is a different quantity. Approaching the line first retains the local geometry; stiffening first at a fixed observation distance gives the wall geometry. These operations need not commute, and convergence of the macroscopic angle imposes no general requirement that the one-sided ridge angles coincide with it \cite{StyleDufresne2012,StyleEtAl2013,LubbersEtAl2014}.

The limiting solid mechanics is represented by a support reaction. It includes distributed tractions and any concentrated contribution generated by the limit. A remote clamp need not act as a literal point force at the line. An unanchored body, by contrast, retains rigid translation and rotation and does not automatically produce a fixed-wall problem.

Anchoring specifies more than the removal of an arbitrary rigid translation. It also determines which external work is included when the wet--dry partition moves. Prescribed displacement and prescribed applied force are different loading controls: in the latter case the work of the applied force belongs to the potential being varied. If a homogeneous prestretch is held fixed, migration takes place at that prescribed stretch, and the associated equilibrium surface energies are evaluated there. Allowing the controls themselves to change with migration would define a different variational problem. Consequently, the support constraint and its external work must be specified before identifying the surviving contact-line variation.

Holding a nonzero prestretch fixed while increasing the modulus can require an increasing baseline stress. The bounded capillary load then produces a small increment about an already stressed state; it does not make the total stress small. Prescribing a fixed external stress instead generally permits the baseline stretch to change with stiffness, and hence can change the surface states at which Young's law is evaluated. Similarly, allowing surface elastic moduli or surface prestresses to increase with the bulk modulus can preserve additional deformation lengths or alter the capillary scaling. These are physically distinct families, excluded by the fixed controls in the present passage.

\subsection{The surviving variation}

On a prescribed flat placement, let $\lambda_\infty$ be the common imposed surface stretch. Compatibility reduces to
\begin{equation}
\delta\vct r_\Gamma
=\delta\vct\chi_S(a_0)
+\lambda_\infty\vct t_W\delta a_0,
\qquad \delta\vct\chi_S(a_0)=\vct0,
\label{eq:rigid-compatibility}
\end{equation}
where $\vct t_W$ is the unit wall tangent normal to the contact line and pointing toward increasing wet area. The endpoint follows the migrating material label even though the solid material is fixed.

The surface boundary form shows exactly which condition survives. Let $\theta$ be the liquid-side angle of the interface against this flat wall. On the common flat trace, define the tangential sum of surface-stress resultants by
$S_t=\Upsilon_{SG}-\Upsilon_{SL}-\gamma_{LG}\cos\theta$.
Using $H_i=\lambda_\infty(\gamma_i-\Upsilon_i)$ in \eqref{eq:corner-first-variation} gives
\begin{equation}
\begin{aligned}
\delta\mathcal E_\Sigma\big|_\Gamma
&=\left[-\lambda_\infty S_t+H_{SL}-H_{SG}\right]\delta a_0\\
&=\lambda_\infty
\left(\gamma_{SL}-\gamma_{SG}+\gamma_{LG}\cos\theta\right)\delta a_0.
\end{aligned}
\label{eq:restricted-corner-young}
\end{equation}
The stresses cancel in the restricted variation. The two independent free-corner equations are therefore not imposed separately on the rigid wall. A ridge resolved at distances comparable with $\ell_{\ec}$ may still satisfy them with different one-sided states.

A mechanical reaction $\vct R$ acting on solid material has the corresponding virtual work $\delta W_R$:
\begin{equation}
\delta W_R=\vct R\cdot\delta\vct\chi_S(a_0)
=\vct R\cdot
\left(\delta\vct r_\Gamma-\lambda_\infty\vct t_W\delta a_0\right)=0.
\label{eq:reaction-material-pairing}
\end{equation}
A tangential reaction is consequently compatible with reversible migration. Assigning it work solely through the motion of the geometric endpoint would use a displacement that is not work-conjugate to the reaction.

\subsection{What must converge}
\label{sec:first-variation-limit}

The restriction above is exact for the prescribed target geometry. Passing from compliant equilibria to that target requires more than vanishing displacement. For a differentiable stationary branch with migration coordinate $a$, let $\mathcal E_{\rm red}^{\epsilon_{\ec}}$ be the energy after the remaining fields have equilibrated, and let $\mathcal E_W$ be the prescribed-wall energy. A sufficient hypothesis is
\begin{equation}
\mathcal E_{\rm red}^{\epsilon_{\ec}}(a)
=\mathcal E_W(a)+C_{\epsilon_{\ec}}+r_{\epsilon_{\ec}}(a),
\qquad \partial_a r_{\epsilon_{\ec}}\to0
\quad\text{locally uniformly},
\label{eq:first-variation-convergence}
\end{equation}
where $C_{\epsilon_{\ec}}$ is independent of migration and $r_{\epsilon_{\ec}}$ is the residual energy. It excludes a residual configurational contribution from the eliminated mechanics. A small residual energy can still vary rapidly with the partition position and produce a finite migration force. Vanishing displacement or energy therefore cannot replace control of the migration variation; a simple counterexample is given in Appendix~\ref{sec:technical-details}.

Global almost minimisers admit another route. Uniform convergence of reduced energies on a compact parameter interval passes minimality to a continuous target energy, even without derivative convergence. Appendix~\ref{sec:nonlinear-continuum} establishes this route for a finite-strain cap class; Appendix~\ref{sec:quadratic-continuum} establishes the stationary-derivative route in its defined quadratic model. The general nonlinear stationary passage remains conditional. The constrained envelope identity and a weak residual formulation are given in Appendix~\ref{sec:technical-details}.

\section{Young equilibrium and the retained reaction}
\label{sec:young}

\subsection{Area conversion on the fixed wall}

Let $\theta$ denote the liquid-side angle of a candidate configuration. For the fixed homogeneous wall, evaluate $\gamma_{SL}$ and $\gamma_{SG}$ in their specified equilibrium states at the prescribed placement. A current co-normal line advance $\delta a>0$ along $\vct t_W$ converts dry to wet area:
$\delta A_{SL}=\delta a\,d\ell$ and $\delta A_{SG}=-\delta a\,d\ell$.
The liquid--gas endpoint contributes
\begin{equation}
\delta A_{LG}\big|_\Gamma
=-\vct N_{LG}\cdot\vct t_W\,\delta a\,d\ell
=\cos\theta\,\delta a\,d\ell.
\label{eq:LG-endpoint-variation}
\end{equation}
After the liquid interface satisfies its volume-constrained interior equilibrium, the remaining variation is
\begin{equation}
\delta\mathcal E\big|_\Gamma
=\int_\Gamma
\left(\gamma_{SL}-\gamma_{SG}+\gamma_{LG}\cos\theta\right)
\delta a\,d\ell.
\label{eq:young-energy-variation}
\end{equation}
Arbitrary reversible migration therefore gives
\begin{equation}
f_\Gamma^{\Conf}
=\gamma_{SG}-\gamma_{SL}-\gamma_{LG}\cos\theta=0,
\qquad
\gamma_{SG}-\gamma_{SL}=\gamma_{LG}\cos\theta_Y.
\label{eq:young-derived}
\end{equation}
An interior partial-wetting angle requires $|\gamma_{SG}-\gamma_{SL}|<\gamma_{LG}$. Equality is degenerate. If $\gamma_{SG}-\gamma_{SL}>\gamma_{LG}$, the sharp area model favors spreading; if $\gamma_{SG}-\gamma_{SL}<-\gamma_{LG}$, it favors loss of wetted area. Neither case selects an interior finite-angle line, and resolving films or wetting transitions requires additional physics. Young's equation is a stationarity condition, with stability and global selection requiring their own argument.

The liquid-volume constraint has already supplied the pressure jump in the interior interface equation, the Young--Laplace relation between pressure and curvature. Its multiplier does not replace the endpoint condition. Interior shape equilibrium and Young's boundary condition are the two corresponding parts of one constrained variation; the full calculation is given in Appendix~\ref{sec:technical-details}.

The scalar character of Young's equation follows from the remaining motion. Locally, a contact line constrained to the wall advances in the wall-tangent direction perpendicular to itself. Translation along the line changes its parametrization rather than converting wet and dry area. Displacement normal to the wall belongs to the constrained solid-placement problem and has its own mechanical reaction. The migration condition therefore has one scalar coefficient, even though the complete mechanical balance remains vectorial. The absence of a normal capillary term from Young's equation reflects this admissible variation; it does not remove the normal load from the physical system.

\subsection{Mechanical closure}

The scalar condition does not exhaust the force balance. With $\vct n_W$ the unit wall normal pointing out of the solid,
\begin{equation}
\gamma_{LG}\vct N_{LG}
=-\gamma_{LG}\cos\theta\,\vct t_W
+\gamma_{LG}\sin\theta\,\vct n_W.
\label{eq:capillary-decomp}
\end{equation}
The normal edge load remains mechanically present. Its transmission may be concentrated or distributed, depending on the model \cite{White2003,MarchandSoft2012}. In the outer description,
\begin{equation}
R_n+\gamma_{LG}\sin\theta
+\text{other resolved normal contributions}=0.
\label{eq:normal-reaction}
\end{equation}
Here $R_n$ is the normal component of the effective solid/support reaction. This is a local resultant statement.

The force transmitted near the contact line must be distinguished from the total force on a remote clamp. A small region surrounding the line detects the liquid--gas edge load and the stresses that transmit it through the solid. A region containing the entire wetted footprint also includes the pressure exerted by the liquid. For a weightless equilibrium drop these contributions can have zero total normal resultant, while remaining nonzero locally and producing a ridge in a compliant support. Thus a vanishing net clamp force would not establish the absence of capillary loading. Conversely, the normal line load alone cannot determine the clamp force without the remaining distributed loads and boundary conditions.

Let $\Upsilon_{S\alpha}^{tt}=\vct t_W\cdot\vct\Upsilon_{S\alpha}\vct t_W$ be the surface-stress component along the advancing wall tangent, and let $R_t^{\rm eff}$ denote the corresponding effective solid/support reaction. With no additional tangential force, the balance is
\begin{equation}
R_t^{\rm eff}+\Upsilon_{SG}^{tt}-\Upsilon_{SL}^{tt}
-\gamma_{LG}\cos\theta=0.
\label{eq:rigid-tangential-mechanics}
\end{equation}
At Young equilibrium this requires
\begin{equation}
R_t^{\rm eff}
=(\gamma_{SG}-\Upsilon_{SG}^{tt})
-(\gamma_{SL}-\Upsilon_{SL}^{tt}).
\label{eq:tangential-reaction-shuttleworth}
\end{equation}
This effective resultant accounts for the eliminated solid mechanics; a microscopic line localisation needs a separate argument. It vanishes in this balance when both surfaces are liquid-like, $\vct\Upsilon_{S\alpha}=\gamma_{S\alpha}\vct P_W$, where $\vct P_W$ projects onto the wall tangent plane. Otherwise the reaction reconciles the mechanical stresses with the energy-based migration condition. Its zero-work pairing is \eqref{eq:reaction-material-pairing}.

A tangential reaction also has a concrete mechanical meaning without constituting resistance to wetting. It balances the tangential resultant generated by the liquid tension and the solid surface stresses, and is transmitted by the constrained solid. Friction or pinning concerns a different operation: resistance to migration of the wet--dry partition through the surface material. The reaction in the present model is paired with displacement of that material, which remains fixed. Its nonzero value therefore neither supplies a dissipation rate nor establishes a threshold for advancing or receding motion. Such a threshold requires an additional constitutive mechanism or a restriction on the admissible migration.

Mechanically closed control-volume and molecular arguments can also recover Young's relation \cite{YamaguchiEtAl2019,FernandezToledanoEtAl2017}. The distinction here concerns the work-conjugate variables: Young's equation tests wet--dry conversion on the fixed wall, while the complete vector balance includes the solid surface stresses and reaction.

\subsection{A surface law that makes the difference explicit}
\label{sec:benchmark}
\label{sec:quadratic-benchmark}
\label{sec:benchmark-rigid}

Consider the reference-area law in plane strain, meaning no displacement or dependence in the transverse direction,
\begin{equation}
\psi_i(\lambda_s)
=\gamma_{0,i}\lambda_s+b_i(\lambda_s-1)
+\frac{k_i}{2}(\lambda_s-1)^2,
\qquad i=SL,SG,
\label{eq:benchmark-law}
\end{equation}
where $\lambda_s$ is scalar surface stretch, on an admissible interval with positive surface stress and $k_i\ge0$. Here $\gamma_{0,i}=\psi_i(1)$ is the reference-state surface energy, $b_i$ the surface-stress offset from that energy, and $k_i$ the surface elastic modulus. The offset $b_i$ specifies the surface response, not bulk prestress. All have units of force per length. Its current-area energy is $\psi_i/\lambda_s$, and its stress is $\psi_i'$; the full derivatives and corner condition are given in Appendix~\ref{sec:technical-details}.

Positive surface stress stabilizes transverse distortions of a surface with no bending resistance; a nonnegative stretching modulus alone does not exclude instability under compression. The coefficients allow the wet and dry energies, stresses and stretching responses to differ. At a freely migrating compliant corner, their material-potential condition can require different one-sided stretches, even when the two surfaces share a common distant imposed stretch. That local condition must be solved together with the bulk equilibrium; a prescribed tip stretch by itself is not a ridge solution.

For an illustrative uncalibrated choice, set $\lambda_\infty=5/4$, $b_i=0$, $\gamma_{0,SL}=40$, $\gamma_{0,SG}=50$, $k_{SL}=20$, $k_{SG}=40$, and $\gamma_{LG}=30$, with coefficients in $\mathrm{mN\,m^{-1}}$. Then
\begin{equation}
\begin{gathered}
\gamma_{SL}=40.5,\quad\gamma_{SG}=51,
\qquad \Upsilon_{SL}=45,\quad\Upsilon_{SG}=60,\\
\theta_Y\simeq69.51^\circ,
\qquad R_t^{\rm eff}=-4.5\ \mathrm{mN\,m^{-1}}.
\end{gathered}
\label{eq:benchmark-numbers}
\end{equation}
The configurational balance uses the energy difference $10.5$; the mechanical balance uses the stress difference $15$ and the reaction $-4.5$. Setting that reaction to zero would select a different angle. These are values on the prescribed common flat placement, not an identification of the one-sided stretches of a compliant ridge.

Nor is macroscopic prestretch necessary for a tangential reaction. At unit imposed stretch, the offsets $b_i$ can already make the stress difference unequal to the energy difference. The reaction then supplies their mismatch while the reversible migration condition remains Young's law.

\begin{figure}[tbp]
\centering
\resizebox{0.98\textwidth}{!}{%
\begin{tikzpicture}[x=1cm,y=1cm,>=Latex,font=\small]
\begin{scope}[xshift=-4.25cm]
\node[font=\bfseries] at (0,3.00) {(a) mechanical balance};
\coordinate (O) at (0,0);
\draw[very thick] (-3.40,0)--(3.40,0);
\draw[very thick] (O)--(-2.75,2.05);
\fill (O) circle (1.25pt);
\node[below=3.0mm] at (-2.25,0.15) {$SL$};
\node[below=3.0mm] at (2.25,0.15) {$SG$};
\node[anchor=south east] at (-2.64,2.18) {$\Sigma_{LG}$};
\node[above right=0.8mm and 0.8mm] at (O) {$\Gamma$};

\draw[line width=0.65pt] (-0.94,0)
  arc[start angle=180,end angle=143.3,radius=0.94];
\node at (-1.16,0.50) {$\theta$};

\draw[->,very thick] (O)--(-2.17,1.62);
\node[anchor=south west] at (-1.72,1.46) {$\gamma_{LG}\vct N_{LG}$};
\node[draw,rounded corners,align=left,anchor=west,inner sep=2.0mm,
      text width=2.75cm,font=\tiny]
  at (0.55,1.46)
  {wall-basis components:\\[0.7mm]
   $\vct t_W:\;-\gamma_{LG}\cos\theta$\\[0.7mm]
   $\vct n_W:\;+\gamma_{LG}\sin\theta$\\[0.9mm]
   normal balance:\\[-0.2mm]
   $R_n+\gamma_{LG}\sin\theta+\cdots=0$};
\end{scope}

\draw[densely dotted] (0,-1.70)--(0,3.08);

\begin{scope}[xshift=4.25cm]
\node[font=\bfseries] at (0,3.00) {(b) configurational variation};
\coordinate (O) at (0,0);
\draw[very thick] (-3.40,0)--(3.40,0);
\draw[very thick] (O)--(-2.75,2.05);
\fill (O) circle (1.25pt);
\node[below=3.0mm] at (-2.25,0.15) {$SL$};
\node[below=3.0mm] at (2.25,0.15) {$SG$};
\node[anchor=south east] at (-2.64,2.18) {$\Sigma_{LG}$};
\node[above right=0.8mm and 0.8mm] at (O) {$\Gamma$};
\draw[line width=0.65pt] (-0.94,0)
  arc[start angle=180,end angle=143.3,radius=0.94];
\node at (-1.16,0.50) {$\theta$};

\draw[->,very thick] (0.50,-1.13)--(2.88,-1.13)
  node[midway,above=0.0mm] {$\delta a>0$};
\node[align=center,font=\scriptsize] at (1.69,-1.68)
{$\Gamma$ advances; a strip of $SG$\\is converted into $SL$};

\node[draw,rounded corners,align=left,anchor=west,inner sep=2.3mm,font=\scriptsize]
  at (0.82,1.48)
  {$\delta A_{SL}=+\delta a\,d\ell$\\
   $\delta A_{SG}=-\delta a\,d\ell$\\
   $\delta A_{LG}|_{\Gamma}=+\cos\theta\,\delta a\,d\ell$};
\end{scope}
\end{tikzpicture}%
}
\caption{Mechanical loading and configurational migration in the same local geometry. (a) The liquid--gas edge traction has tangential and normal components, both retained in the complete mechanical balance. (b) At fixed solid placement, advance of the wet--dry partition converts solid--gas area into solid--liquid area. The liquid--gas endpoint variation supplies the cosine term in Young's law. The wall tangent points to the right and the normal points upward; all resultants are per unit current contact-line length.}
\label{fig:young-unravelled}
\end{figure}

\section{What the continuum results establish}
\label{sec:questions}
\label{sec:scales}

The central limiting question is whether the migrating wet--dry partition retains only the fixed-wall energy variation. The two continuum calculations answer it in different settings.

\begin{table}[tbp]
\centering
\small
\caption{Scope of the supporting results. Each conclusion is for its stated model and admissible class.}
\label{tab:proof-scope}
\vspace{6pt}
\begin{tabular}{@{}>{\raggedright\arraybackslash}p{0.18\textwidth}>{\raggedright\arraybackslash}p{0.38\textwidth}>{\raggedright\arraybackslash}p{0.36\textwidth}@{}}
\toprule
Calculation & Established & Not supplied by that calculation\\
\midrule
Constitutive (App.~\ref{sec:transport-estimate}) & Finite-time conformation and stress estimates along a prescribed common motion & Convergence of the coupled wetting motion\\[1mm]
Finite strain (App.~\ref{sec:nonlinear-continuum}) & Uniform reduced-energy convergence and spatial Young-angle recovery for global almost minimisers & Arbitrary stationary branches or a limiting reaction distribution\\[1mm]
Quadratic (App.~\ref{sec:quadratic-continuum}) & Stationary migration derivatives, angle reconstruction and traction concentration & Uniform approximation of the nonlinear near-line deformation\\
\bottomrule
\end{tabular}
\end{table}

The finite-strain model has a fixed incompressible neo-Hookean body in plane strain, clamped at its undeformed side and bottom boundaries. Its upper surface remains a graph, carries two possibly different tensile surface laws with positive quadratic stretch penalties, and supports a symmetric weightless drop of fixed cross-sectional area with a circular liquid--gas interface. The volume constraint accounts for the area displaced by the solid beneath the cap. Bulk and surface energy bounds, together with this exact liquid geometry, control the moving endpoint. Corollary~\ref{cor:nl-reduced-energy} and Theorem~\ref{thm:nl-young} then pass global minimality to the flat Young state. The proof permits strain concentration near the line. Its quadratic lower bound on surface stretching excludes the liquid-like law $\psi_i=\gamma_{0,i}\lambda_s$, for which a different estimate is needed.

The second model is an incompressible linear-elastic half-space with common positive wet/dry surface stretch moduli and reference surface stresses, although its reference surface energies may differ. The displacement decays at infinity, fixing the remote reference placement. Its reference circular cap encloses a fixed liquid cross-sectional area, and the capillary energy is retained to first order in solid displacement. The surface coefficients regularise displacement near the point loads. Including liquid pressure cancels the total normal line load and prevents an energy divergence from arbitrarily long wavelengths. Minimizing the complete energy lowers it below the flat-placement value; using only the positive stored elastic energy would give the wrong migration correction.

The migration derivative converges faster than the energy correction because of a cancellation specific to this loading. Equal wet/dry reference stresses leave a contact-line force of constant magnitude, the liquid--gas tension, whose direction changes with the cap angle. The normal and tangential responses have the same leading logarithmic dependence on stiffness. Its coefficient in the minimised energy is independent of the force direction, so that stiffness-dependent logarithm is absent from the migration derivative. Unequal reference stresses also change the surface stiffness, so the result is not asserted for that case. Displacement changes both the footprint and the circular-segment area. The reconstructed spatial angle consequently has a logarithmic leading correction absent from the flat-reference angle. This reconstruction uses first-order geometry; it is not a nonlinear equilibrium solution. Both angles refer to the fixed horizontal, not the local ridge tangent.

The same model resolves the retained reaction. At finite stiffness, surface displacement derivatives have finite jumps while bulk traction has only an integrable logarithmic singularity. Stiffening while observing loads on the fixed droplet scale makes displacement vanish but leaves a limiting traction distribution containing the line load. Shrinking the observation neighbourhood about the line first instead gives zero bulk resultant. In the fixed-scale limit, the concentrated surface force associated with a derivative jump is cancelled by the neighbouring distributed surface force; a local jump alone therefore does not identify the transmitted load at the droplet scale. Surface elasticity does not bound every bulk strain: a nonzero tangential point load gives a logarithmic strain singularity near the line. Thus this calculation does not uniformly approximate a finite-strain ridge there, or supply the nonlinear theorem with a reaction-convergence result. The calculations and numerical angle reconstruction are in Appendix~\ref{sec:quadratic-continuum}.

This explains how a Neumann-like ridge resolved near the line and a Young angle measured against the outer wall can coexist \cite{StyleDufresne2012,StyleEtAl2013,LubbersEtAl2014}. They concern different spatial resolutions and different variations. The nonlinear almost-minimiser theorem and the quadratic traction calculation support this interpretation without proving convergence of every nonlinear stationary branch.

The equilibrium construction accordingly does not prescribe a contact-line speed or a dynamic contact angle. In a moving problem, liquid flow, relaxation of an already solid support, and surface-state kinetics can all contribute to dissipation. Their laws must accompany the mechanical and configurational balances. The no-relaxation passage used to construct the elastic bulk should therefore be distinguished from viscoelastic dissipation caused by motion of a ridge through an existing solid \cite{KarpitschkaEtAl2015,DervauxEtAl2020,EssinkEtAl2025}. Young's relation identifies the reversible stationary endpoint under the stated controls; it does not alone establish that a specified dynamics reaches that endpoint.

The restrictions on equilibrium remain physical ones. A retained line energy contributes its own migration variation and can produce curvature-dependent corrections; roughness, disorder and metastable barriers can obstruct reversible migration. Surface-state changes can alter the equilibrium energy difference, while viscoelastic or poroelastic dissipation affects the motion toward equilibrium. These extensions require their corresponding energy, constraints or kinetic laws. Section~\ref{sec:landscape} places them among the neighbouring support models.

\section{Support models, force transmission and spatial resolution}
\label{sec:landscape}

The anchored bulk solid considered here is one of several distinct support models. The reviews \cite{StyleEtAl2017,ChenEtAl2018,AndreottiSnoeijer2020} provide broader context. The following comparison identifies the choices of support mechanics, surface law, force transmission and observational scale that distinguish neighbouring models.

\subsection{Support class and mechanical model}

For a thick elastic substrate, liquid--gas tension sets the capillary loading, while solid surface stresses and surface elastic coefficients introduce mechanical lengths. The elastocapillary length compares surface forces with bulk elastic resistance:
\begin{equation}
\ell_{\ec}\sim \frac{\Gamma_*}{G_S}
\quad\text{or}\quad
\frac{\Gamma_*}{E_S},
\label{eq:elastocapillary-length}
\end{equation}
Here $E_S$ is the substrate Young modulus; $G_S$ is its shear modulus as above. The specified stress scale $\Gamma_*$ is formed from the liquid--gas tension $\gamma_{LG}$, solid surface stresses in the reference or distant held state, and the fixed surface elastic coefficients relevant to the model, all expressed as force per unit length. Geometry, Poisson ratio and the surface law determine the precise lengths and numerical factors. This estimate does not assume a uniform bound on ridge-tip stresses; the finite-strain result in \cref{sec:nonlinear-continuum} allows strain concentration.

Jerison et al. resolved a wetting ridge and demonstrated surface-stress regularisation of the near-line response \cite{Jerison2011}. Style and Dufresne showed that a Young-like macroscopic angle can coexist with a Neumann-like microscopic ridge \cite{StyleDufresne2012}. Here the macroscopic angle is measured against the undeformed support plane at distances from the line large compared with the ridge region; the microscopic geometry concerns the local ridge flanks. Pandey et al. treated large ridge deformations with several surface constitutive laws \cite{Pandey2020}.

Membranes are governed by stretching and tension. Their wetting equilibria require both mechanical and configurational conditions, with elastocapillary ratios involving tension, modulus and thickness \cite{HuiJagota2015,LiuEtAl2020Membrane}. Plates and thin sheets additionally resist bending; their capillary response can involve substantial induced tension and stretching \cite{Fortes1984,RomanBico2010,DavidovitchVella2018}. In a viscoelastic support, ridge motion dissipates energy through constitutive relaxation and produces velocity-dependent apparent angles \cite{KarpitschkaEtAl2015,DervauxEtAl2020,EssinkEtAl2025}. In swollen gels, poroelastic relaxation couples solvent redistribution to deformation of the porous elastic network \cite{ZhengEtAl2025}. These mechanisms require different limiting parameters and are not covered by the fixed-support equilibrium passage.

\subsection{Constitutive description of solid surfaces}

A simple fluid interface $\Sigma$ has reversible surface-stress tensor $\vct S^\Sigma$, equilibrium area potential $\gamma^\Sigma$, and tangent projector $\vct P_\Sigma$, related by
\begin{equation}
\vct S^\Sigma=\gamma^\Sigma\vct P_\Sigma,
\label{eq:fluid-surface-law-landscape}
\end{equation}
so its equilibrium area potential also sets its isotropic mechanical tension. A solid surface potential can additionally depend on strain, orientation, adsorption, composition, order and temperature. In the small-strain Shuttleworth notation, $\Upsilon_{\alpha\beta}$ are the components of the surface-stress tensor $\vct\Upsilon$, $\epsilon^s_{\alpha\beta}$ are surface-strain components, and $\delta_{\alpha\beta}$ is the Kronecker delta in tangent coordinates. With current-area potential $\gamma$,
\begin{equation}
\Upsilon_{\alpha\beta}
=\gamma\delta_{\alpha\beta}
+\frac{\partial\gamma}{\partial\epsilon^s_{\alpha\beta}}.
\label{eq:shuttleworth-landscape}
\end{equation}
The finite-strain form depends on the area measure and stress/strain representation \cite{Shuttleworth1950,GurtinMurdoch1975,AndreottiSnoeijer2016,StyleEtAl2017,HeydenBain2024}. Rey's generalized surface thermodynamics includes simultaneous thermodynamic, strain, order and geometric dependence \cite{Rey2004}. In each case, mechanical stress is obtained from the surface potential by the appropriate constitutive derivative. Energy and mechanics therefore remain connected even when $\gamma$ and $\vct\Upsilon$ differ.

\subsection{Transmission of capillary loading to the support}

Sharp-interface elastic models commonly use a line force together with distributed Laplace pressure. Other models resolve a finite interaction region. White transmitted forces through disjoining pressure, the pressure arising from molecular interactions across a thin film \cite{White2003}; Marchand et al. coupled molecular mean-field forces to elasticity and identified dependence on force transmission within the solid surface layer \cite{MarchandSoft2012}. Bostwick et al. compared line-force models for finite substrates with distinct wet and dry surface stresses \cite{BostwickEtAl2014}. Molecular simulations have retained an explicit elastic contribution in local line-force accounting \cite{LiangEtAl2018}, while nonlinear continuum asymptotics can express the leading local ridge balance through surface edge tractions alone \cite{Pandey2020}.

These descriptions refer to different constitutive and asymptotic settings. The balance in \eqref{eq:mechanical-contour-balance} uses a contour surrounding the contact line and retains both surface edge tractions and the limiting bulk-stress resultant. Whether the bulk contribution vanishes or survives must be established within the selected model.

\subsection{Mechanical and configurational equilibrium conditions}

Material displacement and migration of the wet--dry boundary generate distinct variations of the same energy. This distinction is explicit in deformable-body surface thermodynamics \cite{Olives2010}, membrane theories \cite{HuiJagota2015,LiuEtAl2020Membrane}, and variational treatments of stretched and hyperelastic substrates \cite{SnoeijerRolleyAndreotti2018,Pandey2020}. The present construction applies it to the change in admissible kinematics under a prescribed solid placement. In \cref{tab:model-landscape}, $\Upsilon_S$ denotes a scalar isotropic solid surface stress.

\begin{table}[tbp]
\centering
\caption{Selective comparison of wetting models by support mechanics, surface law and force transmission.}
\label{tab:model-landscape}
\vspace{6pt}
\scriptsize
\renewcommand{\arraystretch}{1.08}
\setlength{\tabcolsep}{2pt}
\begin{tabular}{@{}lllll@{}}
\toprule
\parbox[t]{0.14\textwidth}{\raggedright Support class} & \parbox[t]{0.17\textwidth}{\raggedright Bulk mechanics} & \parbox[t]{0.18\textwidth}{\raggedright Surface constitutive level} & \parbox[t]{0.21\textwidth}{\raggedright Contact-line / force-transmission structure} & \parbox[t]{0.23\textwidth}{\raggedright Representative role}\\
\midrule
\parbox[t]{0.14\textwidth}{\raggedright Elastic half-space/layer} & \parbox[t]{0.17\textwidth}{\raggedright linear elasticity} & \parbox[t]{0.18\textwidth}{\raggedright often constant isotropic $\Upsilon_S$} & \parbox[t]{0.21\textwidth}{\raggedright line/resultant capillarity + pressure; surface stress regularises ridge} & \parbox[t]{0.23\textwidth}{\raggedright ridge mechanics and elastocapillary scale \cite{Jerison2011,StyleDufresne2012,StyleEtAl2013}}\\
\parbox[t]{0.14\textwidth}{\raggedright Hyperelastic solid} & \parbox[t]{0.17\textwidth}{\raggedright finite elasticity} & \parbox[t]{0.18\textwidth}{\raggedright general strain-dependent / Shuttleworth law} & \parbox[t]{0.21\textwidth}{\raggedright variational or local singular ridge balance} & \parbox[t]{0.23\textwidth}{\raggedright large strain, configurational structure \cite{Pandey2020,SnoeijerRolleyAndreotti2018,HenkelEtAl2022}}\\
\parbox[t]{0.14\textwidth}{\raggedright Elastic membrane} & \parbox[t]{0.17\textwidth}{\raggedright finite stretching/tension} & \parbox[t]{0.18\textwidth}{\raggedright wet/dry membrane surface laws} & \parbox[t]{0.21\textwidth}{\raggedright simultaneous mechanical and configurational balances} & \parbox[t]{0.23\textwidth}{\raggedright direct force/configurational decomposition \cite{Fortes1984,HuiJagota2015,LiuEtAl2020Membrane}}\\
\parbox[t]{0.14\textwidth}{\raggedright Thin sheet/plate} & \parbox[t]{0.17\textwidth}{\raggedright stretching + bending} & \parbox[t]{0.18\textwidth}{\raggedright surface energy + sheet law} & \parbox[t]{0.21\textwidth}{\raggedright capillarity coupled to induced tension/bending} & \parbox[t]{0.23\textwidth}{\raggedright bending and induced tension \cite{RomanBico2010,DavidovitchVella2018}}\\
\parbox[t]{0.14\textwidth}{\raggedright Visco-/poroelastic solid} & \parbox[t]{0.17\textwidth}{\raggedright memory and/or solvent transport} & \parbox[t]{0.18\textwidth}{\raggedright often elastic surface law; extensions possible} & \parbox[t]{0.21\textwidth}{\raggedright moving ridge dissipates and may redistribute solvent} & \parbox[t]{0.23\textwidth}{\raggedright time-dependent soft-support response \cite{KarpitschkaEtAl2015,DervauxEtAl2020,ZhengEtAl2025}}\\
\parbox[t]{0.14\textwidth}{\raggedright Distributed interaction} & \parbox[t]{0.17\textwidth}{\raggedright elastic continuum + interaction model} & \parbox[t]{0.18\textwidth}{\raggedright free energies linked to interaction potential} & \parbox[t]{0.21\textwidth}{\raggedright capillary traction spread over interaction zone} & \parbox[t]{0.23\textwidth}{\raggedright probes force-transmission/model dependence \cite{White2003,MarchandSoft2012}}\\
\parbox[t]{0.14\textwidth}{\raggedright Rigid support} & \parbox[t]{0.17\textwidth}{\raggedright placement constrained} & \parbox[t]{0.18\textwidth}{\raggedright specified equilibrium $\gamma_{SL},\gamma_{SG}$} & \parbox[t]{0.21\textwidth}{\raggedright support reaction + reversible configurational migration} & \parbox[t]{0.23\textwidth}{\raggedright Young law against the held plane}\\
\bottomrule
\end{tabular}
\end{table}

\subsection{Geometric reference and observational resolution of contact angles}

A liquid angle measured against the undeformed support plane need not equal the angle measured against a ridge flank. Local ridge measurements probe solid surface stresses \cite{StyleEtAl2013}. Lubbers et al. identified distinct microscopic and macroscopic crossovers involving the molecular interaction length, drop radius and elastocapillary length \cite{LubbersEtAl2014}; force transmission and regularisation further affect the geometry near the contact line \cite{MarchandSoft2012,DervauxLimat2015}. These distinctions motivate the ordered spatial limits used here, without selecting a universal molecular cutoff length.

Prestrain can make an initially isotropic surface mechanically anisotropic \cite{HeydenEtAl2021}, and asymmetric wet/dry Shuttleworth responses can alter horizontal deformation and contact-line behaviour \cite{HenkelEtAl2022}. Surface potentials, held stretch and one-sided tip states must therefore be specified separately. The constitutive no-relaxation passage in \cref{sec:viscoelastic-solid} also has a different purpose from the dissipative moving-ridge models listed above: it identifies an elastic equilibrium model rather than a dynamic contact-angle law.

Apparent angle changes with contact-line radius can arise from state-dependent interfacial potentials as well as a retained line energy. A radius dependence alone therefore does not identify a constant material line tension \cite{WardWu2008,LiuWangZhang2013}.

\section{Conclusions}
\label{sec:conclusion}

The passage from a three-fluid junction to a rigid wetting support changes both the constitutive state and the admissible motions. Permanent affine memory supplies an elastic bulk response. Independently specified material-surface energies distinguish wet--dry conversion from surface deformation. On the compliant solid, spatial displacement and migration give separate equilibrium conditions, with compatible one-sided variations at a ridge.

Rigidity constrains solid material displacement while reversible wet--dry migration can remain possible. Restricting the boundary variation to that surviving motion yields Young's law. The capillary load remains in the mechanical balance through the solid surface stresses and the support reaction, including its normal component and any required tangential component. Contact-line migration generates no reaction work when the solid material remains fixed.

The finite-strain almost-minimiser result and the separate quadratic stationary calculation substantiate this mechanism within explicit classes. They also identify the missing requirements in a general passage: convergence of the relevant minimality property or migration variation, and separate control of the retained mechanics. These distinctions provide a clear basis for using Young's law as the outer equilibrium condition while retaining the forces carried by the support.

\section*{Data accessibility}
No experimental or external research data were used. The equations and proofs are contained in the article and its appendices. Python code and a machine-readable table reproducing Table~\ref{tab:quad-numerics} are supplied as electronic supplementary material.

\section*{Competing interests}
The author declares no competing interests.

\section*{Funding}
The author declares that no funds, grants, or other support were received during the preparation of this manuscript.

\section*{Use of artificial intelligence}
OpenAI ChatGPT assisted with restructuring and wording, notation and mathematical consistency checks, and preparing the numerical code for the quadratic-continuum benchmark. The author directed the revisions and retains responsibility for the scientific content and its verification.

\clearpage
\appendix
\numberwithin{equation}{section}
\section{Constitutive estimate along a prescribed motion}
\label{sec:transport-estimate}

\subsection{Finite-time transport estimate}

Prescribe a common smooth incompressible motion on $0\le t\le T$, where $t$ is time and $T$ is a fixed observation duration independent of the relaxation time $\lambda$. Take $\vct F(0)=\vct I$ and initial conformation $\vct A_\lambda(0)=\vct A_0$, with $\vct A_0$ symmetric positive definite. Define the pulled-back conformation $\vct C_\lambda$ and inverse-metric tensor $\vct H$ along each trajectory by
\begin{equation}
\vct C_\lambda=\vct F^{-1}\vct A_\lambda\vct F^{-\mathsf T},
\qquad \vct H(t)=\vct F^{-1}(t)\vct F^{-\mathsf T}(t).
\end{equation}
Here $\vct F^{-\mathsf T}$ denotes the transpose of $\vct F^{-1}$. Equation~\eqref{eq:oldroyd-A} becomes a linear matrix equation, whose solution is
\begin{equation}
\dot{\vct C}_\lambda=-\lambda^{-1}(\vct C_\lambda-\vct H),
\qquad
\vct C_\lambda(t)=e^{-t/\lambda}\vct A_0
+\frac1\lambda\int_0^t e^{-(t-s)/\lambda}\vct H(s)\dd s.
\label{eq:exact-pullback}
\end{equation}
The dot denotes differentiation along the trajectory, and $s$ is the integration time. Both terms preserve positive definiteness. The no-relaxation conformation for the same motion is $\vct A_\infty=\vct F\vct A_0\vct F^{\mathsf T}$. Let $\|\cdot\|$ denote the operator norm, and let $M_F$ and $M_-$ be uniform bounds on the norms of $\vct F$ and $\vct F^{-1}$, respectively, on $[0,T]$. Subtracting $\vct A_0$ inside the integral gives
\begin{equation}
\begin{aligned}
\sup_{0\le t\le T}\|\vct A_\lambda-\vct A_\infty\|
&\le M_F^2\bigl(1-e^{-T/\lambda}\bigr)
\bigl(M_-^2+\|\vct A_0\|\bigr)\\
&\le \frac{T}{\lambda}M_F^2\bigl(M_-^2+\|\vct A_0\|\bigr).
\end{aligned}
\label{eq:finite-time-bound}
\end{equation}
This quantifies the affine constitutive correspondence \cite{SnoeijerEtAl2020Elasticity}. At fixed $G_S$, multiplication by $G_S$ gives the corresponding polymer-stress estimate. Uniform positive lower eigenvalue bounds imply convergence of the conformation energy in \eqref{eq:oldroyd-energy}.

In a coupled problem, the motion itself depends on $\lambda$. The same calculation compares $\vct A_\lambda$ with the elastic tensor transported by that motion; identifying the limiting motion requires additional uniform bounds, convergence and contact-line compatibility. The estimate is not uniform as $T\to\infty$.

For a general initial reference placement, the no-relaxation solution is
\begin{equation}
\vct A=\vct F\vct D_0\vct F^{\mathsf T},\qquad
\vct D_0(\vct X)=\vct F(0)^{-1}\vct A(0)\vct F(0)^{-\mathsf T},
\label{eq:A-D0}
\end{equation}
where $\vct X$ is a reference material point and $\vct D_0$ is time independent. Positivity of this memory tensor does not establish a globally compatible stress-free Euclidean reference. The relaxed homogeneous neo-Hookean identification uses the matched case.

\subsection{Stored energy and dissipation}

For the conformation energy density $W_p$ in \eqref{eq:oldroyd-energy}, contraction of \eqref{eq:oldroyd-A} with $(G_S/2)(\vct I-\vct A^{-1})$ gives
\begin{equation}
\frac{DW_p}{Dt}=G_S(\vct A-\vct I):\vct D_3-\mathcal D_{\rm rel},
\label{eq:oldroyd-energy-identity}
\end{equation}
where the relaxation dissipation density is
\begin{equation}
\mathcal D_{\rm rel}
=\frac{G_S}{2\lambda}\tr(\vct A+\vct A^{-1}-2\vct I)\ge0.
\label{eq:oldroyd-relax-diss}
\end{equation}
The inequality follows from $a+a^{-1}-2\ge0$ for each positive eigenvalue $a$ of $\vct A$. The solvent contributes $2\eta_s\vct D_3:\vct D_3\ge0$. On a fixed interval with bounded $\vct A$ and $\vct A^{-1}$, the relaxation dissipation tends to zero as $\lambda^{-1}$; for matched incompressible memory, $\det\vct B=1$ and $W_p$ reduces to $W_{NH}$. These statements concern the constitutive equations and do not assert long-time convergence of the coupled dynamics.

\section{Supporting finite-strain passage for global almost minimisers}
\label{sec:nonlinear-continuum}

This appendix proves Young recovery and uniform convergence of the energy minimised over solid placements. The model has a clamped solid with unchanged transverse coordinate (plane strain), the surface energies specified below, and symmetric circular liquid interfaces. The argument uses global almost minimisers, whose energies approach the global infimum within a vanishing tolerance. It requires neither exact minimisers nor uniform ridge-tip strain bounds.

\subsection{An exact bulk--surface energy and exact liquid geometry}

Let $(X,Z)$ be the horizontal and vertical reference coordinates in $\Omega=(-L,L)\times(-H,0)$, a body of fixed half-width $L>0$ and depth $H>0$. All energies are per unit unchanged transverse length. An admissible placement $\vct\chi:\overline\Omega\to\R^2$ is an orientation-preserving homeomorphism onto its image, belongs to $H^1(\Omega;\R^2)$,\footnote{A homeomorphism is a continuous bijection with continuous inverse. The space $H^s$ is the $L^2$-based Sobolev space of order $s$; in particular, $H^1$ requires square-integrable values and first weak derivatives. The boundary trace, meaning the boundary restriction in the Sobolev sense, lies in $H^{1/2}$. On an interval, $H^1_0$ imposes zero endpoint values. The $L^2$ norm is the square root of the integral of squared magnitude, and the $L^\infty$ norm is the essential supremum.} and equals the identity on the sides and bottom. Its deformation gradient $\vct F=\nabla_{(X,Z)}\vct\chi$ satisfies $\det\vct F=1$ almost everywhere. No uniform pointwise gradient bound is imposed. Write its top trace in terms of the horizontal displacement $u$ and vertical displacement $h$ as
\begin{equation}
\vct r(X)=(X+u(X),h(X)),\qquad
\lambda_s=|\vct r'|,\qquad u,h\in H^1_0(-L,L).
\label{eq:nl-top-trace}
\end{equation}
Here $\lambda_s$ is the surface stretch, a prime differentiates with respect to the displayed scalar argument (thus $\vct r'=\dd\vct r/\dd X$), and $|\cdot|$ is the Euclidean norm for vectors or the square root of the sum of squared entries for matrices (the Frobenius norm). We impose reflection symmetry ($u$ odd, $h$ even) and $1+u'>0$ almost everywhere, so that the top is a graph. The wet reference interval has half-width $a\in I=[a_-,a_+]\subset(0,L)$, with fixed endpoints $a_-<a_+$. Its current half-width is $b=a+u(a)>0$, and its endpoint height is $h_e=h(a)$. The two spatial endpoints are $(\pm b,h_e)$.

The liquid has fixed cross-sectional area $V>0$, constant tension $\gamma:=\gamma_{LG}>0$, no gravity, and a simple circular liquid--gas arc meeting the solid only at these endpoints. Admissibility includes absence of solid/liquid self-intersection. The area $S$ of the circular segment above the endpoint chord is \emph{not} generally $V$:
\begin{equation}
 S=V+\int_{-a}^{a}h(X)(1+u'(X))\dd X-2b h_e>0.
\label{eq:nl-exact-volume}
\end{equation}
The factor $1+u'$ converts reference to current horizontal length. Let $\Theta$ be the spatial liquid-interface angle relative to the fixed horizontal, measured through the circular segment. Define the dimensionless circular-segment function $Q$ and liquid--gas capillary energy $\mathcal C$ by
\begin{equation}
 Q(\Theta)=\frac{\Theta-\sin\Theta\cos\Theta}{\sin^2\Theta},\qquad
 S=b^2Q(\Theta),\qquad
 \mathcal C(b,S)=2\gamma b\frac{\Theta}{\sin\Theta},\quad 0<\Theta<\pi.
\label{eq:nl-cap-energy}
\end{equation}
The angle $\Theta$ is distinct from the angle relative to a one-sided ridge tangent. Since $Q'>0$, $\Theta(b,S)$ is uniquely defined and smooth for $b,S>0$. Direct differentiation gives
\begin{equation}
 Q'(\Theta)=\frac{2(\sin\Theta-\Theta\cos\Theta)}{\sin^3\Theta},\qquad
 \mathcal C_b\big|_S=2\gamma\cos\Theta,\qquad
 \mathcal C_S\big|_b=\frac{\gamma\sin\Theta}{b}=:p.
\label{eq:nl-cap-derivatives}
\end{equation}
Here the subscripts denote partial derivatives at the indicated fixed variable, and $p$ is the capillary pressure relative to the gas. All subsequent claims concern this circular-interface class.

For the wet ($i=SL$) and dry ($i=SG$) solid surfaces, choose the energy per unit reference area $\psi_i$ from the family in \cref{sec:benchmark}:
\begin{equation}
 \psi_i(\lambda_s)=\gamma_{0,i}+\Upsilon_{0,i}(\lambda_s-1)
             +\frac{K_i}{2}(\lambda_s-1)^2,
 \qquad i\in\{SL,SG\},\quad \Upsilon_{0,i}\ge K_i>0.
\label{eq:nl-common-law}
\end{equation}
The constants $\gamma_{0,i}=\psi_i(1)$ are the undeformed equilibrium surface energies, $\Upsilon_{0,i}$ the reference surface stresses, and $K_i$ the surface stretch moduli. Taking $\gamma_{0,i}>\Upsilon_{0,i}-K_i/2$ also makes $\psi_i>0$ for $\lambda_s>0$. The positive stretch moduli exclude the liquid-like law $\psi_i(\lambda_s)=\gamma_{0,i}\lambda_s$; that case requires a different estimate. The surface stresses $\Upsilon_i=\psi_i'$ are tensile, while the material conjugates $H_i=\psi_i-\lambda_s\psi_i'$ measure the energy variation under material endpoint migration:
\begin{equation}
 \Upsilon_i=\Upsilon_{0,i}+K_i(\lambda_s-1),\qquad
 H_i=\gamma_{0,i}-\Upsilon_{0,i}+\frac{K_i}{2}(1-\lambda_s^2).
\label{eq:nl-common-conjugates}
\end{equation}
These laws do not identify stress and surface energy: the current-area energy is $\gamma_i(\lambda_s)=\psi_i(\lambda_s)/\lambda_s$. Write $\Delta\gamma=\gamma_{0,SG}-\gamma_{0,SL}$ and $\Delta\Upsilon_0=\Upsilon_{0,SG}-\Upsilon_{0,SL}$. In the earlier notation \eqref{eq:benchmark-law}, the correspondence is $k_i=K_i$ and $b_i=\Upsilon_{0,i}-\gamma_{0,i}$. Only the reference state $\lambda_s=1$ has the surface-energy difference $\Delta\gamma$; it must not be used indiscriminately at a strained tip.

Let $G=G_S>0$ denote the solid shear modulus. After subtracting the constant dry baseline $2L\gamma_{0,SG}$, the total energy $\mathcal E_G$ decomposes into bulk energy $B_G$, a nonnegative surface remainder $\mathscr P_a$, and capillary and endpoint terms:
\begin{align}
 \mathcal E_G(\vct\chi,a)
 &=B_G(\vct\chi)+\mathscr P_a(\vct r)+\mathcal C(b,S)
            -2a\Delta\gamma-2\Delta\Upsilon_0u(a),
 \label{eq:nl-energy}\\
 B_G&=\frac G2\int_\Omega(|\vct F|^2-2)\dd X\dd Z,\nonumber\\
 \mathscr P_a&=\sum_{i\in\{SL,SG\}}\int_{I_i(a)}
 \left[\Upsilon_{0,i}(\lambda_s-1-u')+\frac{K_i}{2}(\lambda_s-1)^2\right]\dd X,
 \label{eq:nl-positive-energy}
\end{align}
where $I_{SL}=(-a,a)$ and $I_{SG}=(-L,-a)\cup(a,L)$. The final term follows from
$\sum_i\int_{I_i}\Upsilon_{0,i}u'\dd X=-2\Delta\Upsilon_0u(a)$ by symmetry and the clamps. It belongs to the surface functional and will be controlled by the trace estimate below. The bulk term is the finite-strain incompressible neo-Hookean energy. The clamps do no work; external dead loads, line energy, and independent pinning barriers are absent. The identity placement with the corresponding flat circular cap is assumed admissible for every $a\in I$.

\subsection{Coercivity and the critical trace estimate}

The boundary space $H^{1/2}$ is critical here because its norm alone does not control point values. The surface energy supplies an additional $H^1$ estimate. Let $x(X)=X+u(X)$, so that $x'=1+u'$ is the horizontal component of the surface tangent, and put $c_*:=\min_i\{K_i,\Upsilon_{0,i}\}>0$. With $\vct e_x=(1,0)$ the horizontal unit vector, pointwise $\lambda_s\ge x'$ and
$|\vct r'-\vct e_x|^2=(\lambda_s-1)^2+2(\lambda_s-x')$. Hence, with no small-slope assumption,
\begin{equation}
 \mathscr P_a\ge\frac{c_*}{2}\int_{-L}^{L}|\vct r'-\vct e_x|^2\dd X
 =\frac{c_*}{2}\int_{-L}^{L}(u'^2+h'^2)\dd X\ge0,
 \qquad a\in I.
\label{eq:nl-surface-coercivity}
\end{equation}
This bound is uniform in the moving partition and is the required surface coercivity, namely quadratic control of the surface displacement gradient. If $K_{SL}=K_{SG}=:K_s$ and $\Upsilon_{0,SL}=\Upsilon_{0,SG}=: \Upsilon_0$, then $\mathscr P_a$ is independent of $a$ and can be written as
$\mathscr P=\frac{K_s}{2}\int_{-L}^L|\vct r'-\vct e_x|^2\dd X+(\Upsilon_0-K_s)\int_{-L}^L(\lambda_s-1)\dd X$, with the latter integral nonnegative by the endpoint clamps.
Write $SO(2)$ for the two-dimensional rotation matrices and $\operatorname{dist}$ for distance in the Frobenius norm. For $\det\vct F=1$, $\operatorname{dist}^2(\vct F,SO(2))\le|\vct F|^2-2$. The geometric-rigidity theorem \cite{FJM2002} controls the distance of a placement from a single rigid motion by this integrated gradient error. The fixed-boundary Poincar\'e inequality controls displacement by its gradient, giving
\begin{equation}
 \|\vct\chi-\operatorname{id}\|_{H^1(\Omega)}
 \le C\sqrt{B_G/G}.
\label{eq:nl-rigidity}
\end{equation}
Here $\operatorname{id}$ is the identity placement. The symbols $C$ and its indexed variants denote finite positive constants independent of $G$; their values may change between estimates. Rigidity first supplies a single rotation and translation. Their discrepancy from the identity is bounded by the same error because the boundary values are fixed on a nontrivial segment. The constant depends on the fixed reference geometry. This step would not hold unchanged for an unclamped body or a simultaneously degenerating domain.

Assume $|\Delta\gamma|<\gamma$. Define the flat Young angle $\theta_Y$ and contact half-width $a_Y$, the dimensionless stiffness $\beta$, a vanishing rate parameter $\eta_\beta$, and the dimensionless bulk and surface energy $e$ by
\begin{equation}
 \theta_Y=\arccos(\Delta\gamma/\gamma),\quad
 a_Y=\sqrt{V/Q(\theta_Y)},\quad
 \beta=\frac{G a_Y}{\gamma},\quad
 \eta_\beta=\frac{1+\log\beta}{\beta},\quad
 e=\frac{B_G+\mathscr P_a}{\gamma a_Y}.
\label{eq:nl-scales}
\end{equation}
Choose $I$ with $a_Y$ in its interior. All constants below may depend on $L/a_Y$, $H/a_Y$, $a_-/a_Y$, $a_+/a_Y$, $c_*/\gamma$, $\Upsilon_{0,i}/\gamma$, $K_i/\gamma$, and $\theta_Y$, but not on $\beta\ge2$.

\begin{lemma}[Uniform displacement from two trace norms]
\label{lem:nl-trace}
For every admissible placement,
\begin{equation}
 \frac{\|(u,h)\|_{L^\infty(-L,L)}}{a_Y}
 \le C\sqrt{e\eta_\beta}.
\label{eq:nl-trace-bound}
\end{equation}
\end{lemma}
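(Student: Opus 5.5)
The plan is to bound the top trace $\vct w:=(u,h)$ by combining two estimates of different strength. The bulk energy gives a small $H^{1/2}$ bound, of order $\beta^{-1/2}$. The surface energy gives an $O(1)$ bound in $H^1$. Neither alone controls point values, but splitting the trace into low and high frequencies, Brezis--Gallou\"et style, yields an $L^\infty$ bound that loses only a logarithm. That logarithm is exactly the factor $1+\log\beta$ in $\eta_\beta$.

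\emph{Step 1: nondimensionalise.} I set $\xi=X/a_Y$, $\zeta=Z/a_Y$ and $\tilde{\vct w}(\xi)=\vct w(a_Y\xi)/a_Y$. In two dimensions the Dirichlet integral of the displacement is scale invariant, so \eqref{eq:nl-rigidity} gives
\[
\|\tilde{\vct\chi}-\operatorname{id}\|_{H^1(\tilde\Omega)}^2\le C\,\frac{B_G}{G a_Y^2}\le C\,\frac{e}{\beta},
\]
since $B_G/(Ga_Y^2)=(B_G/(\gamma a_Y))\,\beta^{-1}\le e/\beta$. The $L^2$ part is included through the Poincar\'e inequality for the rescaled domain, whose constant depends only on $L/a_Y$ and $H/a_Y$. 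Similarly, \eqref{eq:nl-surface-coercivity} gives
\[
\int|\tilde{\vct w}'|^2\dd\xi=a_Y^{-1}\int|\vct w'|^2\dd X\le 2\gamma e/c_*\le Ce .
\]

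\emph{Step 2: pass to the whole line without losing the $H^{1/2}$ bound.} This is the step I expect to be the main obstacle. A function in $H^{1/2}(-L,L)$ that is merely zero at the endpoints need not extend by zero into $H^{1/2}(\R)$, because of the $H^{1/2}_{00}$ issue. I avoid this by working in the bulk rather than on the boundary. The displacement $\vct\chi-\operatorname{id}$ vanishes on the sides and bottom, so extending it by zero gives an $H^1$ function on the lower half-plane with the same norm. Its trace on $\{\zeta=0\}$ is $\tilde{\vct w}$ extended by zero, and the half-plane trace theorem gives
\[
\int_\R(1+|k|)\,|\hat{\tilde{\vct w}}(k)|^2\dd k\le C\,\frac{e}{\beta}.
\]
Extension by zero is harmless for the $H^1_0$ seminorm, so also $\int_\R k^2|\hat{\tilde{\vct w}}|^2\dd k\le Ce$.

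\emph{Step 3: frequency splitting.} For every $\xi$, $|\tilde{\vct w}(\xi)|\le\int|\hat{\tilde{\vct w}}|\dd k$. I split this integral at $|k|=N$ and apply Cauchy--Schwarz to each part. The low frequencies contribute at most
\[
\Bigl(\int_{|k|<N}(1+|k|)^{-1}\dd k\Bigr)^{1/2}\Bigl(C\,\frac{e}{\beta}\Bigr)^{1/2}\le C\sqrt{\frac{e(1+\log N)}{\beta}} .
\]
The high frequencies contribute at most
\[
\Bigl(\int_{|k|>N}k^{-2}\dd k\Bigr)^{1/2}(Ce)^{1/2}\le C\sqrt{\frac{e}{N}} .
\]
Choosing $N=\beta\ge2$ makes the two contributions comparable and gives $\|\tilde{\vct w}\|_{L^\infty}\le C\sqrt{e(1+\log\beta)/\beta}=C\sqrt{e\eta_\beta}$. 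This is \eqref{eq:nl-trace-bound} after undoing the scaling, and all constants depend only on the listed dimensionless ratios. Continuity of the trace, needed for pointwise values, follows from $H^1$ in one dimension.
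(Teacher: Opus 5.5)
Your proof is correct and follows essentially the paper's argument: rigidity plus the trace theorem give an $H^{1/2}$ bound of order $\sqrt{e/\beta}$, surface coercivity gives an $H^1$ bound of order $\sqrt e$, and a Fourier splitting at $N\sim\beta$ combines them. The only difference is the extension step; you extend the bulk displacement by zero to the lower half-plane and use the continuous transform, whereas the paper reflects the trace and applies a cutoff on a periodic interval before using Fourier series, and both ways correctly control the $H^{1/2}$ norm of the extended trace.
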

\begin{proof}
Rescale both reference coordinates by $a_Y$ and write $w=(u,h)/a_Y$ on the resulting interval. The bulk trace theorem, which controls the boundary $H^{1/2}$ norm by the interior $H^1$ norm, and \eqref{eq:nl-rigidity} bound $\|w\|_{H^{1/2}}$ by $C\sqrt{e/\beta}$. Surface coercivity and the zero endpoint values bound $\|w\|_{H^1}$ by $C\sqrt e$. Extend $w$ to a larger periodic interval by reflection and multiplication by a smooth cutoff; this fixed extension is bounded in both spaces. Splitting its Fourier series at the integer mode $N\ge2$ and applying Cauchy--Schwarz gives
\begin{equation}
 \|w\|_\infty\le C\left[
 \sqrt{\log(2+N)}\,\|w\|_{H^{1/2}}
       +N^{-1/2}\|w\|_{H^1}\right].
\label{eq:nl-fourier-split}
\end{equation}
The sums over integer Fourier modes $n$ satisfy $\sum_{|n|\le N}(1+|n|)^{-1}\le C\log(2+N)$ and $\sum_{|n|>N}n^{-2}\le C/N$. Choosing $N=\lceil\beta\rceil$ proves the estimate.
\end{proof}

\subsection{Young recovery without a migration-derivative hypothesis}

For the flat geometry let $\theta_0(a)=\Theta(a,V)$ be the cap angle and $\mathcal E_W$ the total energy after subtraction of the dry baseline:
\begin{equation}
 \mathcal E_W(a)=\mathcal C(a,V)-2a\Delta\gamma,\qquad
 \mathcal E_W'=2(\gamma\cos\theta_0-\Delta\gamma),\qquad
 \mathcal E_W''=-2\gamma\sin\theta_0\,\theta_0'>0,
\label{eq:nl-flat-convexity}
\end{equation}
where
\begin{equation}
 \theta_0'(a)=-\frac{(\theta_0-\sin\theta_0\cos\theta_0)\sin\theta_0}
 {a(\sin\theta_0-\theta_0\cos\theta_0)}<0.
\label{eq:nl-flat-angle-derivative}
\end{equation}
Thus $a_Y$ is the unique flat minimiser, and $\mathcal E_W$ is uniformly strongly convex on $I$: its second derivative has a positive lower bound there.

\begin{theorem}[Finite-strain Young limit in the stated cap class]
\label{thm:nl-young}
Under the preceding hypotheses, let $(\vct\chi_G,a_G)$ be admissible global almost minimisers with dimensionless energy tolerance $\delta_\beta$:
\begin{equation}
 \mathcal E_G(\vct\chi_G,a_G)
 \le\inf_{\vct\chi,\,a\in I}\mathcal E_G(\vct\chi,a)
            +\gamma a_Y\delta_\beta,\qquad
 \delta_\beta\ge0,\quad\delta_\beta\longrightarrow0.
\label{eq:nl-almost-minimizers}
\end{equation}
Use the subscript $G$ for the top trace $\vct r_G$, displacement fields $u_G,h_G$, current half-width $b_G$, circular-segment area $S_G$, and dimensionless energy $e_G$ associated with this pair. Let $\Theta_G=\Theta(b_G,S_G)$ be its spatial cap angle, with $S_G$ given by \eqref{eq:nl-exact-volume}. For all sufficiently large $\beta$,
\begin{align}
 e_G+\frac{\mathcal E_W(a_G)-\mathcal E_W(a_Y)}{\gamma a_Y}
       &\le C\eta_\beta+2\delta_\beta,
\label{eq:nl-energy-rate}\\
 \frac{\|(u_G,h_G)\|_\infty}{a_Y}
       &\le C\sqrt{\eta_\beta(\eta_\beta+\delta_\beta)},
\label{eq:nl-displacement-rate}\\
 \frac{|a_G-a_Y|}{a_Y}+|\Theta_G-\theta_Y|
       &\le C\sqrt{\eta_\beta+\delta_\beta}.
\label{eq:nl-angle-rate}
\end{align}
In particular, the spatial angle satisfies Young's equation in the stiff limit. The theorem applies to exact global minimisers whenever they exist, and to arbitrary admissible almost-minimizing sequences with the displayed vanishing error.
\end{theorem}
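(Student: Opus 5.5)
The plan is to compare $\mathcal E_G$ with the flat energy $\mathcal E_W$ through a two-sided estimate, and then use strong convexity of $\mathcal E_W$ on $I$.

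\emph{Upper bound.} The identity placement with the flat cap at $a=a_Y$ is admissible by hypothesis. It has $B_G=\mathscr P_a=0$, $u=h=0$ and $S=V$, so it gives $\inf\mathcal E_G\le\mathcal E_W(a_Y)$.

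\emph{Lower bound.} For an arbitrary admissible pair $(\vct\chi,a)$, I write
\[
\mathcal E_G-\mathcal E_W(a)=B_G+\mathscr P_a+[\mathcal C(b,S)-\mathcal C(a,V)]-2\Delta\Upsilon_0u(a).
\]
The bracket is handled by the mean-value theorem in $(b,S)$, using \eqref{eq:nl-cap-derivatives}. Since $\mathcal C_b=2\gamma\cos\Theta$ and $\mathcal C_S=\gamma\sin\Theta/b$ are bounded on the relevant range, the bracket is at most $C\gamma(|b-a|+a_Y^{-1}|S-V|)$. By definition $b-a=u(a)$. From \eqref{eq:nl-exact-volume}, $|S-V|\le C a_Y\|(u,h)\|_\infty$; the integral term costs one sup norm, because $\int h(1+u')\,\dd X$ is bounded by $\|h\|_\infty$ times the current width. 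All error terms are therefore bounded by $C\gamma\|(u,h)\|_\infty$. Lemma~\ref{lem:nl-trace} turns this into $C\gamma a_Y\sqrt{e\eta_\beta}$. Dividing by $\gamma a_Y$ gives
\[
\frac{\mathcal E_G-\mathcal E_W(a)}{\gamma a_Y}\ge e-C\sqrt{e\eta_\beta}\ge \tfrac e2-C'\eta_\beta,
\]
by Young's inequality.

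\emph{Combining.} For the almost minimiser, I put the two bounds together:
\[
\tfrac{e_G}{2}+\frac{\mathcal E_W(a_G)-\mathcal E_W(a_Y)}{\gamma a_Y}\le C\eta_\beta+\delta_\beta .
\]
Multiplying by two yields \eqref{eq:nl-energy-rate}; the difference of flat energies is nonnegative because $a_Y$ minimises $\mathcal E_W$. This gives $e_G\le C(\eta_\beta+\delta_\beta)$, and Lemma~\ref{lem:nl-trace} then gives \eqref{eq:nl-displacement-rate}. Strong convexity of $\mathcal E_W$ on $I$, from \eqref{eq:nl-flat-convexity}, gives $\mathcal E_W(a_G)-\mathcal E_W(a_Y)\ge c\gamma a_Y^{-1}(a_G-a_Y)^2$. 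Hence $|a_G-a_Y|/a_Y\le C\sqrt{\eta_\beta+\delta_\beta}$.

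\emph{Angle.} $\Theta$ is a smooth function of $(b,S)$ near $(a_Y,V)$, since $Q'>0$. Also $|b_G-a_G|$ and $|S_G-V|$ are bounded by the displacement rate, which is smaller than $\sqrt{\eta_\beta+\delta_\beta}$. So $|\Theta_G-\theta_0(a_G)|$ is controlled by the displacement rate. Then $|\theta_0(a_G)-\theta_Y|\le C|a_G-a_Y|/a_Y$ by \eqref{eq:nl-flat-angle-derivative}. Together these give \eqref{eq:nl-angle-rate}. The estimates hold for $\beta$ large and $\delta_\beta$ small, which is needed so that $b_G$ and $S_G$ stay in a compact set where $\Theta$ is bounded away from $0$ and $\pi$.

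\emph{Main obstacle.} The delicate step is the lower bound. One difficulty is the linear term $-2\Delta\Upsilon_0u(a)$, which is linear in a point value. It must be absorbed by $e$ through Lemma~\ref{lem:nl-trace}, costing only $\sqrt{e\eta_\beta}$; this is exactly why the logarithmic rate $\eta_\beta$ appears instead of $1/\beta$. The other difficulty is an a priori bound ensuring $b$ and $S$ stay positive and bounded, so that the constants in the capillary mean-value estimate are uniform. I would first derive a crude bound on $e$ for near-minimisers: $\mathcal C\ge0$, and $\mathcal E_G$ is at most $\mathcal E_W(a_Y)+\gamma a_Y\delta_\beta$. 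This bounds $e$ by a constant, so Lemma~\ref{lem:nl-trace} makes $\|(u,h)\|_\infty/a_Y$ small for large $\beta$, after which the local expansion applies.
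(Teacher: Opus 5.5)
Your proposal is correct and follows the paper's proof essentially step for step. The steps are: comparison with the flat identity competitor at $a_Y$; a crude a priori bound on $e_G$ from $\mathcal C>0$, $a_G\in I$ and Lemma~\ref{lem:nl-trace}; the smooth comparison of $\mathcal C(b_G,S_G)$ with $\mathcal C(a_G,V)$ and of the reference-stress point term, each costing $C\sqrt{e_G\eta_\beta}$; absorption by Young's inequality; and strong convexity of $\mathcal E_W$ together with smoothness of $\Theta$ for the half-width and angle rates. The only cosmetic difference is how you bound $\int_{-a}^{a}h(1+u')\dd X$: you use $2b\|h\|_\infty$ via the graph condition $1+u'>0$, whereas the paper uses the current surface length $\int\lambda_s\dd X$ controlled by surface coercivity, and either bound suffices.
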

\begin{proof}
Comparison with the identity placement at $a_Y$ gives
$\mathcal E_G(\vct\chi_G,a_G)\le\mathcal E_W(a_Y)+\gamma a_Y\delta_\beta$.
Because $\mathcal C>0$ and $a_G\in I$, the comparison and Lemma~\ref{lem:nl-trace} first give
$e_G\le C_0+C_1\sqrt{e_G\eta_\beta}+\delta_\beta$; the square-root term bounds $2\Delta\Upsilon_0u_G(a_G)/(\gamma a_Y)$. The product estimate $C_1\sqrt{e_G\eta_\beta}\le e_G/2+C_1^2\eta_\beta/2$ gives a uniform bound on $e_G$. The trace lemma then makes the top displacement uniformly small. Surface coercivity also bounds the current surface length $\int_{-L}^L\lambda_s\dd X$. Hence, by \eqref{eq:nl-exact-volume},
\begin{equation}
 \frac{|b_G-a_G|}{a_Y}+\frac{|S_G-V|}{a_Y^2}
 \le C\frac{\|(u_G,h_G)\|_\infty}{a_Y}.
\label{eq:nl-geometry-control}
\end{equation}
For example, the integral containing $h(1+u')$ is bounded by $\|h\|_\infty\int\lambda_s\dd X$; no pointwise bound on $u'$ is needed. For large $\beta$, $(b_G,S_G)$ therefore lies in a fixed compact subset of $(0,\infty)^2$ near $I\times\{V\}$. Smoothness of $\mathcal C$ and \eqref{eq:nl-trace-bound} imply
\begin{equation}
 \frac{|\mathcal C(b_G,S_G)-\mathcal C(a_G,V)|}{\gamma a_Y}
 \le C\sqrt{e_G\eta_\beta}.
\label{eq:nl-cap-comparison}
\end{equation}
Define the normalised excess flat energy $d_G=[\mathcal E_W(a_G)-\mathcal E_W(a_Y)]/(\gamma a_Y)\ge0$. The comparison inequality and the trace bound on the reference-stress point term give
$e_G+d_G\le C\sqrt{e_G\eta_\beta}+\delta_\beta$.
The same product estimate yields $e_G/2+d_G\le C\eta_\beta+\delta_\beta$, and hence \eqref{eq:nl-energy-rate} after changing the constant. The trace lemma gives \eqref{eq:nl-displacement-rate}. Strong convexity in \eqref{eq:nl-flat-convexity} bounds $|a_G-a_Y|/a_Y$ by $C\sqrt{d_G}$; smoothness of $\Theta(b,S)$ and \eqref{eq:nl-geometry-control} give the remaining term in \eqref{eq:nl-angle-rate}.
\end{proof}

The infimum in \eqref{eq:nl-almost-minimizers} is finite: \eqref{eq:nl-energy} and the trace lemma give a lower bound $\gamma a_Y(e-C\sqrt{e\eta_\beta}-C_0)$, and flat competitors give a finite upper bound. Positive-tolerance almost minimisers therefore exist by the definition of infimum. Taking, for example, $\delta_\beta=\beta^{-2}$ gives the displayed rates even if no exact minimiser exists.

\begin{corollary}[Uniform convergence of the reduced energy]
\label{cor:nl-reduced-energy}
Let $m_G(a)=\inf_{\vct\chi}\mathcal E_G(\vct\chi,a)$, where the infimum is over admissible placements at the fixed partition $a\in I$. For all sufficiently large $\beta$,
\begin{equation}
 \mathcal E_W(a)-C\gamma a_Y\eta_\beta
 \le m_G(a)\le\mathcal E_W(a)
 \qquad\text{uniformly for }a\in I.
\label{eq:nl-uniform-reduced-energy}
\end{equation}
\end{corollary}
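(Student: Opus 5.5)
The upper bound is immediate from a comparison placement. At any fixed $a\in I$, the identity placement with the flat circular cap is admissible by hypothesis. For this placement $u=h=0$, so $B_G=0$ and $\lambda_s=1$, which gives $\mathscr P_a=0$. The terms $2\Delta\Upsilon_0u(a)$ vanish, $b=a$ and $S=V$. By \eqref{eq:nl-energy} its energy is therefore exactly $\mathcal C(a,V)-2a\Delta\gamma=\mathcal E_W(a)$. Taking the infimum over placements yields $m_G(a)\le\mathcal E_W(a)$.

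For the lower bound, I would fix $a\in I$ and an arbitrary admissible $\vct\chi$ at that $a$. The plan is to repeat the fixed-partition part of the proof of Theorem~\ref{thm:nl-young} without using minimality. Write $e$ for the dimensionless bulk and surface energy of $\vct\chi$. The point term is controlled by Lemma~\ref{lem:nl-trace}:
\[
2|\Delta\Upsilon_0u(a)|\le C\gamma a_Y\sqrt{e\eta_\beta}.
\]
For the capillary term, there are two cases.

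\textbf{Case $e$ bounded.} Suppose $e\le E_0$ for a fixed constant $E_0$. The trace lemma makes $\|(u,h)\|_\infty/a_Y\le C\sqrt{E_0\eta_\beta}$ small for large $\beta$. The bound on $\int\lambda_s\dd X$ from surface coercivity then gives \eqref{eq:nl-geometry-control}, so $(b,S)$ lies in a fixed compact subset of $(0,\infty)^2$ near $I\times\{V\}$. Smoothness of $\mathcal C$ then gives \eqref{eq:nl-cap-comparison}, namely $|\mathcal C(b,S)-\mathcal C(a,V)|\le C\gamma a_Y\sqrt{e\eta_\beta}$. Combining these,
\[
\mathcal E_G(\vct\chi,a)\ge\mathcal E_W(a)+\gamma a_Y\bigl(e-C\sqrt{e\eta_\beta}\bigr)\ge\mathcal E_W(a)-C\gamma a_Y\eta_\beta,
\]
where the last step is the product estimate $C\sqrt{e\eta_\beta}\le e/2+C^2\eta_\beta/2$, keeping $e/2\ge0$.

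\textbf{Case $e$ large.} If $e>E_0$, I would instead use only $\mathcal C>0$ and $|2a\Delta\gamma|\le C\gamma a_Y$ on $I$. This gives
\[
\mathcal E_G\ge\gamma a_Y\bigl(e-C\sqrt{e\eta_\beta}-C_0\bigr)\ge\gamma a_Y\bigl(e/2-C\eta_\beta-C_0\bigr).
\]
Choosing $E_0$ so large that $E_0/2-C_0\ge\max_I\mathcal E_W/(\gamma a_Y)$, which is finite by continuity of $\mathcal E_W$ on the compact interval $I$, makes this lower bound exceed $\mathcal E_W(a)-C\gamma a_Y\eta_\beta$ as well. Taking the infimum over $\vct\chi$ in both cases gives the left inequality in \eqref{eq:nl-uniform-reduced-energy}.

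The main obstacle is making every constant independent of $a$, so that the convergence is uniform. This holds because surface coercivity \eqref{eq:nl-surface-coercivity} is uniform in the partition, the trace lemma does not involve $a$, and the compact set used for the smoothness of $\mathcal C$ and $\Theta$ is chosen once, near $I\times\{V\}$. One must also verify in the bounded-energy case that the admissibility constraint $S>0$ keeps $(b,S)$ inside that compact set. This follows from \eqref{eq:nl-geometry-control} once $\beta$ is large enough that $C\sqrt{E_0\eta_\beta}$ is below a fixed threshold depending only on $a_-$ and $V$.
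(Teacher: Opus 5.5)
Your proof is correct and follows essentially the paper's argument. The upper bound comes from the flat identity competitor, and the lower bound combines the trace lemma, the exact-area geometry control and the cap comparison on placements with bounded $e$, with all constants independent of $a$. The only difference is bookkeeping: the paper restricts to placements with $\mathcal E_G(\vct\chi,a)\le\mathcal E_W(a)$ and gets a uniform bound on $e$ from $e\le C_0+C_1\sqrt{e\eta_\beta}$, whereas you split on a threshold $E_0$ for $e$ and show that large-$e$ placements already exceed $\max_I\mathcal E_W$, which amounts to the same thing.
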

\begin{proof}
The identity placement gives the upper bound. For the lower bound it suffices to consider placements with $\mathcal E_G(\vct\chi,a)\le\mathcal E_W(a)$. Since $\mathcal E_W$ is bounded on $I$ and $\mathcal C>0$, \eqref{eq:nl-energy} and Lemma~\ref{lem:nl-trace} give $e\le C_0+C_1\sqrt{e\eta_\beta}$, uniformly in $a$. This inequality bounds $e$ uniformly. The exact-area and cap comparisons \eqref{eq:nl-geometry-control}--\eqref{eq:nl-cap-comparison} consequently apply to every such placement with constants independent of $a$. Bounding the reference-stress term in the same way gives
\[
 \frac{\mathcal E_G(\vct\chi,a)-\mathcal E_W(a)}{\gamma a_Y}
 \ge e-C\sqrt{e\eta_\beta}\ge-C'\eta_\beta.
\]
All other placements already satisfy the lower bound. Taking the infimum proves the claim without requiring an attaining placement.
\end{proof}

\paragraph{Extension to nonquadratic surface laws.}
The proof uses only $\gamma_{0,i}=\psi_i(1)$, $\Upsilon_{0,i}=\psi_i'(1)>0$, and a quadratic lower bound on the energy above its tangent at unit stretch. Specifically, for every stretch $\lambda>0$ suppose that a fixed modulus $k_*>0$ satisfies
\begin{equation}
 \psi_i(\lambda)-\gamma_{0,i}-\Upsilon_{0,i}(\lambda-1)
 \ge\frac{k_*}{2}(\lambda-1)^2\qquad(\lambda>0),\quad k_*>0.
\label{eq:nl-general-surface-coercivity}
\end{equation}
Replace $K_i(\lambda-1)^2/2$ in $\mathscr P_a$ by the left-hand remainder in \eqref{eq:nl-general-surface-coercivity}. Then \eqref{eq:nl-energy} stays exact and \eqref{eq:nl-surface-coercivity} holds with $c_*=\min\{k_*,\Upsilon_{0,SL},\Upsilon_{0,SG}\}$. Every subsequent estimate is unchanged for finite-energy admissible placements. Both conclusions therefore extend to this nonquadratic surface class. Local convexity at unit stretch is insufficient when tip strains remain uncontrolled.

\subsection{Scope of the finite-strain proof}

The theorem and corollary establish the minimiser route without \eqref{eq:first-variation-convergence}. They also give $\|\vct\chi_G-\operatorname{id}\|_{H^1(\Omega)}\to0$ and $\|\vct r_G'-\vct e_x\|_{L^2(-L,L)}\to0$ along the almost minimisers. These estimates allow strain concentration in a shrinking interval and provide no uniform pointwise bound on the actual surface stresses. If separate regularity and zero bulk material force justify the local corner condition, the quantities in \eqref{eq:nl-common-conjugates} must satisfy $H_{SL}=H_{SG}$. With $\lambda_{SL}$ and $\lambda_{SG}$ the respective one-sided stretches at the ridge tip, the common-response case gives $K_s(\lambda_{SG}^2-\lambda_{SL}^2)/2=\Delta\gamma$, rather than equal tip stretches.

Arbitrary nonlinear stationary branches, their migration derivatives, and their reaction-stress limits remain outside this result. Prestretched or variable domains, noncoercive surface energies, line energies, and other loading ensembles require a new comparison analysis. Appendix~\ref{sec:quadratic-continuum} computes stationary derivatives and reaction measures for a separately defined quadratic half-space model.

\section{Supporting stationary and reaction calculation in a half-space}
\label{sec:quadratic-continuum}

This calculation resolves the stationary migration derivative and the reaction in an explicitly solvable model. Let $x$ be the horizontal coordinate and $z<0$ the depth coordinate in an incompressible \emph{linear-elastic} half-space of shear modulus $G=G_S$. All fields are independent of the transverse coordinate, with no displacement in that direction; energies are per unit transverse length. The displacement is $\vct v=(v_x,v_z)$, with tangential and normal surface traces $u(x)=v_x(x,0)$ and $h(x)=v_z(x,0)$. We use the quadratic expansion of \eqref{eq:nl-common-law}, with common wet/dry surface stretch moduli $K_{SL}=K_{SG}=K_s>0$ and reference surface stresses $\Upsilon_{0,SL}=\Upsilon_{0,SG}=\Upsilon_0>0$, and retain the capillary energy to \emph{first order} in displacement about a flat circular cap. The reference surface energies may still differ. The resulting quadratic functional is solved exactly; its half-space kernels and surface-elastic regularisation are established ingredients \cite{LubbersEtAl2014,StyleXu2018}. It is a separate model, not a uniform linearisation of the finite-strain ridge in Appendix~\ref{sec:nonlinear-continuum}.

\subsection{Complete capillary loading and energy elimination}

Retain the reference wet half-width $a$, fixed liquid cross-sectional area $V$, and liquid--gas tension $\gamma=\gamma_{LG}$ from Appendix~\ref{sec:nonlinear-continuum}. The flat-cap angle is $\theta_0(a)=\Theta(a,V)$, where $\Theta(b,S)$ is the circular-segment angle for half-width $b$ and segment area $S$ in \eqref{eq:nl-cap-energy}. Define the normal line-load amplitude $q$ at each contact line, the tangential amplitude $f$ at the right line, and the liquid pressure $p$ by
\begin{equation}
 q=\gamma\sin\theta_0,\qquad f=-\gamma\cos\theta_0,\qquad p=q/a.
\label{eq:quad-load-amplitudes}
\end{equation}
The first variation $\delta\mathcal C$ of the liquid--gas capillary energy, using the exact geometry \eqref{eq:nl-exact-volume} at $u=h=0$, is
\begin{equation}
 \delta\mathcal C=\gamma\cos\theta_0\,[u(a)-u(-a)]
 -q[h(a)+h(-a)]+p\int_{-a}^{a}h(x)\dd x.
\label{eq:quad-capillary-expansion}
\end{equation}
Thus the tangential and normal capillary load distributions $t_x,t_z$ conjugate to $u,h$ are
\begin{equation}
 t_x=f[\delta(x-a)-\delta(x+a)],\qquad
 t_z=q[\delta(x-a)+\delta(x+a)]-p\,\mathbf1_{(-a,a)}.
\label{eq:quad-loads}
\end{equation}
Here $\delta(x-a)$ is the unit Dirac mass at $x=a$, and $\mathbf1_{(-a,a)}$ equals one on the wet interval and zero elsewhere. Both loads have zero resultant. Omitting the liquid pressure would produce an energy divergence from arbitrarily long wavelengths. The common linear surface-prestress term integrates to zero for decaying $u$, leaving $K_su'^2/2+\Upsilon_0h'^2/2$, where primes on surface displacements denote differentiation with respect to $x$.

For any surface field $v$, use the Fourier transform $\widehat v(k)=\int_\R v(x)e^{-ikx}\dd x$, where $k$ is the horizontal wave number and $i=\sqrt{-1}$. Eliminating the decaying incompressible bulk field gives the tangential and normal surface stiffness multipliers $D_x,D_z$:
\begin{equation}
 D_x(k)=2G|k|+K_s k^2,\qquad D_z(k)=2G|k|+\Upsilon_0 k^2.
\label{eq:quad-stiffness}
\end{equation}
The incompressible Young modulus is $3G$. The factor $2G|k|$ follows from the bulk solution in \cref{app:spectral}, consistently with Refs.~\cite{LubbersEtAl2014,StyleXu2018}. With $\mathcal E_W(a)$ the flat-wall energy of Appendix~\ref{sec:nonlinear-continuum}, define the quadratic energy $\mathcal E_G^{\rm q}$ by
\begin{equation}
 \mathcal E_G^{\rm q}(a,u,h)=\mathcal E_W(a)
 +\frac12\int_\R\bigl(D_x|\widehat u|^2+D_z|\widehat h|^2\bigr)\frac{\dd k}{2\pi}
 -\langle t_x,u\rangle-\langle t_z,h\rangle.
\label{eq:quad-energy}
\end{equation}
The brackets $\langle t,v\rangle$ pair a load distribution with a displacement. The final two terms are the first-order capillary energy in \eqref{eq:quad-capillary-expansion}, so no additional capillary work is added. We minimize over surface traces for which the quadratic integral is finite, identifying displacements that differ by constants and selecting the decaying representative. The zero-resultant loads are bounded linear functionals on this space; $K_s,\Upsilon_0>0$ regularise the short wavelengths. In particular, $K_s>0$ is needed for a nonzero tangential point load to have finite energy. A liquid-like law with $K_s=0$ is therefore outside this model.

Completion of the square yields $\widehat u=\widehat t_x/D_x$, $\widehat h=\widehat t_z/D_z$, and the minimised energy $\mathcal E_{{\rm red},G}^{\rm q}$. Its correction to $\mathcal E_W$ is denoted by $r_G$:
\begin{equation}
 \mathcal E_{{\rm red},G}^{\rm q}(a)=\mathcal E_W(a)+r_G(a),\qquad
 r_G=-\frac{q^2J_n(\epsilon_\Upsilon)+f^2J_t(\epsilon_K)}{\pi G},\qquad
 \epsilon_s=\frac{s}{2Ga}.
\label{eq:quad-elimination}
\end{equation}
Here $s$ is either $K_s$ or $\Upsilon_0$, and $\epsilon_s$ is the corresponding surface-elastic length $s/(2G)$ divided by the reference half-width $a$; $\epsilon_K$ and $\epsilon_\Upsilon$ denote these two values. The tangential and normal dimensionless kernels $J_t,J_n$ use the integration variable $t=ka$ for $k>0$:
\begin{equation}
 J_t(\epsilon)=\int_0^\infty\frac{\sin^2t}{t(1+\epsilon t)}\dd t,\qquad
 J_n(\epsilon)=\int_0^\infty\frac{(\cos t-\sin t/t)^2}{t(1+\epsilon t)}\dd t.
\label{eq:quad-kernels}
\end{equation}
Although the stored elastic energy is positive, $r_G$ is negative: it includes the displacement-dependent capillary energy, and the minimised sum equals minus one-half of the load--displacement pairing. Replacing $r_G$ by the stored energy would reverse the sign of the migration correction.

\subsection{Derivative control and the common-stress cancellation}

For the kernel index $j\in\{t,n\}$, put $D_j(\epsilon)=-\epsilon J_j'(\epsilon)$, where the prime denotes differentiation with respect to $\epsilon$. These $D_t,D_n$ are kernel derivatives, distinct from the stiffnesses $D_x,D_z$. The identities proved in \cref{app:spectral} give, as $\epsilon\downarrow0$,
\begin{align}
 J_t(\epsilon)&=\tfrac12[\log(2/\epsilon)+\gamma_E]+O(\epsilon^2),\nonumber\\
 J_n(\epsilon)&=\tfrac12[\log(2/\epsilon)+\gamma_E]-\tfrac12
                         +O(\epsilon^2|\log\epsilon|),\qquad D_j\longrightarrow\tfrac12,
\label{eq:quad-kernel-asymptotics}
\end{align}
where $\gamma_E$ is Euler's constant. Here $O(g)$ denotes a term whose ratio to $g$ stays bounded in the indicated limit, and $o(g)$ denotes a term whose ratio tends to zero; these bounds are uniform on the stated parameter intervals. The $D_j$ and $\epsilon D_j'$ are uniformly bounded. Differentiating \eqref{eq:quad-elimination} \emph{after} scaling the integrals to $t=ka$ gives the exact formula below, where primes on $r_G,q,f$ denote differentiation with respect to $a$:
\begin{equation}
 r_G'=-\frac1{\pi G}\left[
 2qq'(J_n-J_t)+\frac{q^2D_n+f^2D_t}{a}\right].
\label{eq:quad-derivative}
\end{equation}
Here each kernel has its corresponding $\epsilon_s$ argument. The simplification uses $q^2+f^2=\gamma^2$, hence $qq'+ff'=0$. In particular,
\begin{equation}
 r_G'=-\frac1{\pi G}\left[
 qq'\left(\log\frac{K_s}{\Upsilon_0}-1\right)+\frac{\gamma^2}{2a}\right]
 +o(G^{-1})
\label{eq:quad-derivative-leading}
\end{equation}
uniformly on compact partial-wetting intervals, where $\theta_0$ stays strictly between zero and $\pi$. Write $\beta=Ga_Y/\gamma$, with $a_Y$ the flat Young half-width from \eqref{eq:nl-scales}; the limit holds with $a_Y$ and the surface coefficients fixed. The common leading logarithmic term cancels between the normal and tangential loads.

This cancellation depends on the common-reference-stress loading. For arbitrary smooth amplitudes $q(a),f(a)$ in \eqref{eq:quad-loads}, with $p=q/a$ and the same fixed stiffnesses \eqref{eq:quad-stiffness}, whose coefficients are independent of $x$, the leading derivative is $r_G'=-(\log\beta)\,\partial_a(q^2+f^2)/(2\pi G)+O(G^{-1})$. Hence the logarithmic term disappears when $\partial_a(q^2+f^2)=0$, as here; omitting the tangential capillary coupling generally destroys that property. Unequal wet/dry reference stresses also change the surface stiffness operator, so the faster stationary rate below is not asserted for that more general model.

Let $I$ be the fixed compact interval of admissible half-widths containing $a_Y$ in its interior, as in Appendix~\ref{sec:nonlinear-continuum}. For fixed $K_s/\gamma,\Upsilon_0/\gamma$, these formulas and their bounded kernel derivatives imply
\begin{equation}
 \sup_{a\in I}\frac{|r_G|}{\gamma a_Y}=O\!\left(\frac{\log\beta}{\beta}\right),\qquad
 \sup_{a\in I}\left(\frac{|r_G'|}{\gamma}
              +\frac{a_Y|r_G''|}{\gamma}\right)=O(\beta^{-1}).
\label{eq:quad-C2-control}
\end{equation}
For example, $\partial_a(J_n-J_t)=(D_n-D_t)/a$ and $\partial_aD_j=-\epsilon_sD_j'/a$ are bounded on $I$. These estimates control $r_G''$ without assuming that the moving Dirac load is differentiable as a bounded functional on the energy space.

\begin{proposition}[Stationary recovery for common wet/dry mechanical coefficients]
\label{prop:quad-stationary}
For the quadratic model \eqref{eq:quad-energy} with common wet/dry mechanical coefficients and sufficiently large $\beta$, $\mathcal E_{{\rm red},G}^{\rm q}$ has a unique stationary point $a_G$ in the interior of $I$, where its derivative vanishes. It is its strict global minimum on $I$. With $\Delta\gamma=\gamma_{0,SG}-\gamma_{0,SL}$ the reference surface-energy difference and $\theta_Y=\arccos(\Delta\gamma/\gamma)$ the flat Young angle,
\begin{equation}
 \frac{|a_G-a_Y|}{a_Y}+|\theta_0(a_G)-\theta_Y|=O(\beta^{-1}),\qquad
 \Delta\gamma-\gamma\cos\theta_0(a_G)=\tfrac12r_G'(a_G).
\label{eq:quad-stationary-result}
\end{equation}
\end{proposition}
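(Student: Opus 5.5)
The plan is to treat $\mathcal E_{{\rm red},G}^{\rm q}=\mathcal E_W+r_G$ as a $C^2$ perturbation of the strongly convex flat energy on $I$. I will combine the flat identities \eqref{eq:nl-flat-convexity}--\eqref{eq:nl-flat-angle-derivative} with the uniform $C^2$ control \eqref{eq:quad-C2-control}. Write $\mu:=\min_I\mathcal E_W''>0$. By \eqref{eq:nl-flat-convexity}, $\mathcal E_W''=-2\gamma\sin\theta_0\theta_0'$, and on the compact $I$ the angle $\theta_0$ stays in a compact subset of $(0,\pi)$. Hence $\mu\ge c\gamma/a_Y$ with $c$ depending only on the fixed dimensionless data. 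By \eqref{eq:quad-C2-control}, $\sup_I|r_G''|\le C\gamma/(a_Y\beta)$. So for $\beta$ large, $(\mathcal E_{{\rm red},G}^{\rm q})''\ge\mu/2>0$ on $I$, and the reduced energy is strictly convex there. I take the exact derivative formula \eqref{eq:quad-derivative} and the bounded-kernel-derivative claims as given from \cref{app:spectral}. These are what make $r_G\in C^2(I)$ with the stated bounds.

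Next comes existence and interiority. Since $a_Y$ is interior to $I$ and $\mathcal E_W'(a_Y)=0$ with $\mathcal E_W''\ge\mu$, the flat derivative satisfies $\mathcal E_W'(a_-)\le-\mu(a_Y-a_-)$ and $\mathcal E_W'(a_+)\ge\mu(a_+-a_Y)$. These are fixed multiples of $\gamma$. Meanwhile $\sup_I|r_G'|\le C\gamma/\beta$. For $\beta$ large the reduced derivative is therefore negative at $a_-$ and positive at $a_+$. The intermediate value theorem gives a zero $a_G$ in the interior. Strict convexity makes it unique, and it is the strict global minimum on $I$. The identity in \eqref{eq:quad-stationary-result} is then just the stationarity equation rewritten. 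Indeed $0=\mathcal E_W'(a_G)+r_G'(a_G)=2(\gamma\cos\theta_0(a_G)-\Delta\gamma)+r_G'(a_G)$, which gives $\Delta\gamma-\gamma\cos\theta_0(a_G)=\tfrac12 r_G'(a_G)$.

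For the rates, apply the mean value theorem to $\mathcal E_W'$ between $a_Y$ and $a_G$. This gives $\mu|a_G-a_Y|\le|\mathcal E_W'(a_G)|=|r_G'(a_G)|\le C\gamma/\beta$, so $|a_G-a_Y|/a_Y=O(\beta^{-1})$. For the angle, $\theta_0=\Theta(a,V)$ is smooth on $I$ with $|\theta_0'|\le C/a_Y$ by \eqref{eq:nl-flat-angle-derivative}, and $\theta_0(a_Y)=\theta_Y$. Hence $|\theta_0(a_G)-\theta_Y|=O(\beta^{-1})$. Alternatively, the identity above gives $|\cos\theta_0(a_G)-\cos\theta_Y|=O(\beta^{-1})$, which inverts with a bounded constant because $\sin\theta$ is bounded below on the relevant range.

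The step I expect to be the main obstacle is not the convexity argument itself. It is justifying the uniform second-derivative bound in \eqref{eq:quad-C2-control}, in particular that the $\log\beta$ growth present in $r_G$ does not reappear in $r_G'$ or $r_G''$. My first approach is to differentiate \eqref{eq:quad-derivative} once more in $a$. The terms $q,q',f,f'$ are smooth in $a$ uniformly on $I$, and $J_n-J_t$ is bounded as $\epsilon\to0$ by \eqref{eq:quad-kernel-asymptotics}. The derivatives $\partial_a(J_n-J_t)=(D_n-D_t)/a$ and $\partial_aD_j=-\epsilon_sD_j'/a$ are bounded because $D_j$ and $\epsilon D_j'$ are. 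Every term then carries the prefactor $1/G$, which yields $a_Y|r_G''|/\gamma=O(\beta^{-1})$. If the kernel bounds turned out weaker, the fallback is to use only $C^1$ control. Existence would still follow from the sign change, and the $O(\beta^{-1})$ location bound from strong convexity of $\mathcal E_W$ alone. Uniqueness would then follow from monotonicity of $\mathcal E_W'+r_G'$, which needs $r_G''=o(\mu)$ in any case.
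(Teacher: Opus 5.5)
Your proposal is correct and follows essentially the same route as the paper: strong convexity of $\mathcal E_W$ on $I$ and the endpoint sign change of $\mathcal E_W'$ survive the $C^2$ bounds \eqref{eq:quad-C2-control}, and the mean-value theorem applied to $\mathcal E_W'(a_G)=-r_G'(a_G)$ gives the $O(\beta^{-1})$ rate. The identity is the stationarity equation with $\mathcal E_W'=2(\gamma\cos\theta_0-\Delta\gamma)$, where the factor two comes from the two symmetric contact lines, and your justification of the second-derivative bound via $\partial_a(J_n-J_t)=(D_n-D_t)/a$ and $\partial_aD_j=-\epsilon_sD_j'/a$ is the one the paper gives just before the proposition.
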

\begin{proof}
The flat energy is uniformly strictly convex by \eqref{eq:nl-flat-convexity}, and its derivative has opposite strict signs at the endpoints of $I$. The bounds through second derivatives in \eqref{eq:quad-C2-control} preserve both properties for large $\beta$. The mean-value theorem applied to $\mathcal E_W'(a_G)=-r_G'(a_G)$ gives the rate. The factor $1/2$ in the second identity comes from varying two contact lines symmetrically.
\end{proof}

\subsection{Distinction among reference-cap, spatial-cap, and microscopic angles}

The angle $\theta_0(a_G)$ in \eqref{eq:quad-stationary-result} parameterises the \emph{flat reference cap}. The spatial circular-segment angle follows instead from the displaced endpoints and segment area. Denote their first-order approximations by $b^{(1)}$ and $S^{(1)}$, where the superscript $(1)$ denotes first order in displacement, and write $\delta S=S^{(1)}-V$. The solved displacement gives
\begin{equation}
 u(a)=\frac{fJ_t}{\pi G},\qquad
 \delta S=\int_{-a}^{a}h\dd x-2ah(a)=-\frac{2aqJ_n}{\pi G},\qquad
 b^{(1)}=a+u(a),\quad S^{(1)}=V+\delta S.
\label{eq:quad-geometry-reconstruction}
\end{equation}
Define the reconstructed spatial angle $\Theta^{(1)}=\Theta(b^{(1)},S^{(1)})$ where these arguments are positive, and write $\Theta_G^{(1)}$ for its value at $a_G$. This reconstruction evaluates the exact cap function on first-order geometry; it does not solve the nonlinear equilibrium problem. Equations \eqref{eq:nl-cap-derivatives} and \eqref{eq:quad-kernel-asymptotics} give
\begin{equation}
 \Theta^{(1)}_G-\theta_Y
 =-\frac{\gamma\sin\theta_Y}{2\pi G a_Y}\log\beta+O(\beta^{-1}).
\label{eq:quad-spatial-angle-rate}
\end{equation}
Indeed, with $\delta b=b^{(1)}-a$ and $\delta\Theta$ the linear change in the cap angle, $\delta\Theta=(\delta S-2aQ\delta b)/(a^2Q')$. Here $Q=Q(\theta_0)$ is the dimensionless segment-area function in \eqref{eq:nl-cap-energy}, and $Q'$ its angle derivative; $(q+Qf)/Q'=\gamma\sin\theta_0/2$. Thus the reconstructed spatial angle has a logarithmic leading correction, unlike the $O(\beta^{-1})$ reference-angle rate. Both use the fixed horizontal as geometric reference; neither is the microscopic angle measured against a one-sided solid tangent at the contact line.

\begin{table}[tbp]
\centering
\small
\caption{Uncalibrated quadratic-continuum benchmark. In consistent dimensionless units, $\gamma=a_Y=1$, $\theta_Y=70^\circ$, $K_s=10$, $\Upsilon_0=12$, and $V=Q(70^\circ)$. The spatial column is the first-order reconstruction, not a computed nonlinear solution.}
\label{tab:quad-numerics}
\vspace{6pt}
\begin{tabular}{@{}rrrr@{}}
\toprule
$\beta$ & $a_G/a_Y$ & $\theta_0(a_G)$ (degrees) & $\Theta_G^{(1)}$ (degrees)\\
\midrule
30 & 1.00365372 & 69.661345 & 69.292118\\
100 & 1.00115544 & 69.892744 & 69.685906\\
300 & 1.00038823 & 69.963945 & 69.864093\\
1000 & 1.00011659 & 69.989170 & 69.948950\\
3000 & 1.00003887 & 69.996390 & 69.979852\\
10000 & 1.00001166 & 69.998917 & 69.992925\\
\bottomrule
\end{tabular}
\end{table}

\subsection{Finite local jumps and a nonvanishing reaction at the droplet scale}

Let $\vct\sigma$ be the incremental bulk Cauchy stress tensor, and let $\sigma_i=(\vct\sigma\vct e_z)_i$, $i\in\{x,z\}$, denote its tangential and normal traction components at $z=0$, where $\vct e_z$ is the upward unit normal. On the surface write $(v_x,v_z)=(u,h)$, and let $|D|$ denote the operator with Fourier multiplier $|k|$. The traction equations are
\begin{equation}
 \sigma_x-K_su''=t_x,\qquad \sigma_z-\Upsilon_0h''=t_z,\qquad
 \sigma_i=2G|D|v_i,\quad(v_x,v_z)=(u,h),
\label{eq:quad-traction-equations}
\end{equation}
At fixed finite $G$, the bulk tractions have locally integrable logarithmic singularities, not Dirac masses. The derivative jumps at the right contact line $x=a$, with $[v']_a=v'(a+)-v'(a-)$, therefore satisfy
\begin{equation}
 [u']_a=-f/K_s,\qquad [h']_a=-q/\Upsilon_0.
\label{eq:quad-slope-jumps}
\end{equation}
These jumps need not vanish as $G\to\infty$, even though the displacements vanish. For the surface coefficients $s_x=K_s$ and $s_z=\Upsilon_0$,
\begin{equation}
 \widehat\sigma_i(k)=\frac{2G}{2G+s_i|k|}\widehat t_i(k)
 \longrightarrow\widehat t_i(k),\qquad
 s_x=K_s,\quad s_z=\Upsilon_0,
\label{eq:quad-reaction-limit}
\end{equation}
Thus $\sigma_i\to t_i$ in tempered distributions, meaning convergence after pairing with any fixed smooth rapidly decaying test function, also along $a_G\to a_Y$. In particular this limit resolves loads on the fixed droplet scale $a_Y$, while $s_i/(2G)$ tends to zero. The elastic restoring reaction on the interfacial degrees of freedom is $-t_i$. At the right line in the Young limit its tangential resultant is $-f=\Delta\gamma$ for this common-response law, and its normal resultant is $-q=-\gamma\sin\theta_Y$, consistently with \eqref{eq:normal-reaction} and \cref{sec:young}.

More precisely, fix $a$ and a smooth function $\varphi$ supported in $(-1,1)$ with $\varphi(0)=1$. The cutoff $\varphi_\rho(x)=\varphi((x-a)/\rho)$ samples a neighbourhood of radius $\rho>0$ about the right line. With $t_i^{\rm line}$ the coefficient of the corresponding Dirac mass,
\begin{equation}
 \lim_{G\to\infty}\lim_{\rho\downarrow0}\langle\sigma_i,\varphi_\rho\rangle=0,
 \qquad
 \lim_{\rho\downarrow0}\lim_{G\to\infty}\langle\sigma_i,\varphi_\rho\rangle=t_i^{\rm line},
 \quad(t_x^{\rm line},t_z^{\rm line})=(f,q).
\label{eq:quad-noncommuting-limits}
\end{equation}
Thus the shrinking-neighborhood and rigidity limits do not commute in this model. Also $-s_iv_i''\to0$ distributionally: on any fixed observation scale as $G\to\infty$, the Dirac part due to the derivative jump is cancelled by the regular part of the surface curvature or stretch-gradient distribution. A local derivative jump alone therefore does not determine the limiting load distribution.

Positive surface moduli regularise displacement, but not every bulk strain. The extension in \cref{app:spectral} has $\partial_zv_x|_{z=0}=2|D|u-\partial_xh$, and the tangential line load gives a logarithmic singularity in $|D|u$ when $f\ne0$. The point-load calculation therefore supplies no uniform approximation of a finite-strain ridge in the immediate contact-line neighbourhood. Its reaction limit is also distinct from Appendix~\ref{sec:nonlinear-continuum}, which proves geometric and energetic convergence without identifying a limiting stress distribution.

\subsection{Bulk extension and normalisation}
\label{app:spectral}

For a nonzero Fourier mode, use the local notation $\kappa=|k|$ for the wave-number magnitude and $U=\widehat u(k)$, $H=\widehat h(k)$ for the surface-displacement amplitudes. Thus $\kappa$ and $H$ here do not denote a curvature or the finite-layer depth of Appendix~\ref{sec:nonlinear-continuum}. With $\pi$ the incremental pressure enforcing incompressibility, the decaying bulk displacement and pressure are
\begin{align}
 \widehat v_x(k,z)&=[U+(\kappa U-ikH)z]e^{\kappa z},\nonumber\\
 \widehat v_z(k,z)&=[H+(-ikU-\kappa H)z]e^{\kappa z},\nonumber\\
 \widehat\pi(k,z)&=(-2iGkU-2G\kappa H)e^{\kappa z}.
\label{eq:quad-bulk-extension}
\end{align}
Direct substitution gives $\nabla\cdot\vct v=0$, $G\Delta\vct v-\nabla\pi=0$, and $\widehat{\vct\sigma\vct e_z}=2G|k|(U,H)$ at $z=0$, where $\nabla$ and $\Delta$ are the gradient and Laplacian in $(x,z)$. The bulk energy is one-half of boundary traction work, which fixes both the coefficient in \eqref{eq:quad-stiffness} and the $1/(\pi G)$ factor in \eqref{eq:quad-elimination}; in this factor $\pi$ is the usual numerical constant. The loads transform as
\begin{equation}
 \widehat t_x=-2if\sin(ka),\qquad
 \widehat t_z=2q\left[\cos(ka)-\frac{\sin(ka)}{ka}\right].
\label{eq:quad-load-transforms}
\end{equation}
They vanish at zero wave number with orders one and two, respectively. Together with $D_i\sim2G|k|$ at small $|k|$ and $D_i\sim s_i k^2$ at large $|k|$, this proves that $\int|\widehat t_i|^2/D_i\dd k$ is finite. Weighted Cauchy--Schwarz makes the loads continuous on the energy space, and completion of the square gives its unique minimising class. The explicit Fourier solutions select decaying continuous surface displacements; their bulk extensions need not be square-integrable over the infinite half-space.

\subsection{Positive integral representation and exact kernel reduction}

In this subsection only, $x>0$ denotes a dimensionless kernel argument, not the horizontal position, and $s,y$ are dimensionless integration variables. Define the auxiliary function $A$ by
\begin{equation}
 A(x)=\int_0^\infty\frac{1-\cos(xs)}{s(1+s)}\dd s
 =\frac12\int_0^\infty e^{-s}\log(1+x^2/s^2)\dd s.
\label{eq:quad-A-positive}
\end{equation}
The second representation follows by writing $(1+s)^{-1}$ as a Laplace integral and integrating the difference of cosines. With $R(x)$ denoting the nonnegative remainder after subtracting $\log x+\gamma_E$, it yields
\begin{equation}
 \begin{gathered}
 A(x)=\log x+\gamma_E+R(x),\qquad
 0\le R(x)=\tfrac12\int_0^\infty e^{-s}\log(1+s^2/x^2)\dd s\le x^{-2},\\
 xA'(x)=\int_0^\infty e^{-s}\frac{x^2}{s^2+x^2}\dd s.
 \end{gathered}
\label{eq:quad-A-estimates}
\end{equation}
Primes on $A$ now denote derivatives with respect to its argument $x$. With $x=2/\epsilon$, elementary integration gives
\begin{equation}
 J_t(\epsilon)=\tfrac12 A(x),\qquad
 J_n(\epsilon)=\frac{2}{x^2}\int_0^x yA(y)\dd y-\tfrac12 A(x),
 \qquad D_t=\tfrac12xA',\quad D_n=A-2J_n-D_t.
\label{eq:quad-A-kernels}
\end{equation}
For example, integrate $y[1-\cos(ys)]$ first in the second formula. Adding $J_t$ gives the numerator
$1-\sin(2t)/t+\sin^2t/t^2$, which is exactly
$(\cos t-\sin t/t)^2+\sin^2t$. This proves the normal-kernel reduction without separating individually divergent integrals.

The bound on $R$ proves \eqref{eq:quad-kernel-asymptotics}: split $\int_0^x yR(y)\dd y$ at one; it is $O(1+\log x)$ for large $x$. Also $0<xA'<1$ and $1-xA'\le2/x^2$. Define the tangential and normal kernel numerators $\phi_t(t)=\sin t$ and $\phi_n(t)=\cos t-\sin t/t$. They give the absolutely convergent identities
\begin{equation}
 D_j(\epsilon)=\int_0^\infty\frac{\epsilon\phi_j(t)^2}{(1+\epsilon t)^2}\dd t,
 \qquad
 \epsilon D_j'(\epsilon)=\int_0^\infty
 \frac{\epsilon(1-\epsilon t)\phi_j(t)^2}{(1+\epsilon t)^3}\dd t.
\label{eq:quad-D-bounds}
\end{equation}
Because $|\phi_t|\le1$ and $|\phi_n|\le2$, both integrals are uniformly bounded in absolute value by $1$ and $4$, respectively. The explicit formulas and the remainder estimate give $D_j\to1/2$.

For numerical evaluation, define the sine integral $\mathrm{Si}(x)=\int_0^x\sin t\,\dd t/t$ and cosine integral $\mathrm{Ci}(x)=-\int_x^\infty\cos t\,\dd t/t$, the latter as an improper integral. Then
\begin{align}
 A(x)&=\log x+\gamma_E-\cos x\,\mathrm{Ci}(x)
                    -\sin x\,[\mathrm{Si}(x)-\pi/2],\nonumber\\
 A'(x)&=\sin x\,\mathrm{Ci}(x)-\cos x\,[\mathrm{Si}(x)-\pi/2].
\label{eq:quad-A-special}
\end{align}
Equivalently, the nonoscillatory representation \eqref{eq:quad-A-estimates} avoids cancellation at large $x$. The identity $A''+A=\log x+\gamma_E$ gives
\begin{equation}
 J_n=\log x+\gamma_E-\tfrac12-\frac{2A'}x+\frac{2A}{x^2}-\tfrac12A.
\label{eq:quad-Jn-evaluation}
\end{equation}
For small $x$, evaluating the integral in \eqref{eq:quad-A-kernels} avoids subtraction of large nearly equal terms. The kernel representations, reduced-energy derivative, and geometric reconstruction were checked independently by symbolic substitution, positive quadrature, and finite differences.

\section{Technical variational details}
\label{sec:technical-variations}
\label{sec:technical-details}

\subsection{Finite-strain surface stress and thermodynamic controls}
\label{sec:surface-stress-exact}

Consider a local first-gradient material-surface energy, meaning that it depends on local deformation but not on curvature or deformation gradients along the surface; no independent couple stress is retained. The subscript $s$ denotes one specified material surface. In orthonormal reference tangent coordinates, its surface deformation gradient $\vct F_s$ is a rank-two $3\times2$ map, $\vct C_s=\vct F_s^{\mathsf T}\vct F_s$ is the surface metric, and $J_s=\sqrt{\det\vct C_s}>0$ is the current-to-reference area ratio.
Write the reference-area potential as $\Psi_s=J_s\gamma_s(\vct C_s,\vct q_s)$, where $\gamma_s$ is per current area and $\vct q_s$ collects referential internal variables. For this objective, or frame-indifferent, potential, differentiation at fixed thermodynamic controls and $\vct q_s$ gives the surface first Piola stress $\vct P_s=\partial_{\vct F_s}\Psi_s$:
\[
\vct P_s
=J_s\gamma_s\vct F_s\vct C_s^{-1}
+2J_s\vct F_s\frac{\partial\gamma_s}{\partial\vct C_s}.
\]
Here $\partial_{\vct F_s}J_s=J_s\vct F_s\vct C_s^{-1}$. The push-forward in \eqref{eq:surface-piola-pushforward} therefore gives the current surface stress $\vct\Upsilon_s$:
\begin{equation}
\vct\Upsilon_s=\gamma_s\vct P_{\Sigma_s}
+2\vct F_s\frac{\partial\gamma_s}{\partial\vct C_s}\vct F_s^{\mathsf T},
\qquad
\vct P_{\Sigma_s}=\vct F_s\vct C_s^{-1}\vct F_s^{\mathsf T}.
\label{eq:finite-shuttleworth}
\end{equation}
Here $\vct P_{\Sigma_s}$ is the orthogonal projector onto the current surface tangent plane, distinct from the Piola stress $\vct P_s$. In particular, strain-independent $\gamma_s$ gives $\vct\Upsilon_s=\gamma_s\vct P_{\Sigma_s}$. These specified reference-area, Piola and current-stress measures give the finite-strain relation \cite{GurtinMurdoch1975,HeydenBain2024}.

Let $\psi_s^H(\vct F_s,\{n_i^0\},T)$ be a Helmholtz energy per reference area, where $n_i^0$ is the independently exchangeable adsorbed amount of species $i$ per reference area. Here $T$ denotes temperature, not the time interval in Appendix~\ref{sec:transport-estimate}. At fixed adsorbate chemical potentials $\mu_i$, the equilibrium reduced potential is
\begin{equation}
\Psi_s(\vct F_s,T,\{\mu_i\})
=\left[\psi_s^H-\sum_i\mu_i n_i^0\right]_{n_i^0=n_{i,*}^0},
\qquad
\frac{\partial\psi_s^H}{\partial n_i^0}=\mu_i,
\qquad \gamma_s=\frac{\Psi_s}{J_s}.
\label{eq:surface-grand-potential}
\end{equation}
The subscript $*$ denotes the equilibrium adsorbed amounts. On a differentiable stable branch, the envelope identity gives
$\partial_{\vct F_s}\Psi_s=\partial_{\vct F_s}\psi_s^H$ evaluated at equilibrium.

For differentiation at fixed adsorbed amount, let $A$ be the current interface area, $N$ the adsorbed amount, $c=N/A$ its current concentration and $f(c)$ the Helmholtz energy per current area. Direct differentiation of $Af(N/A)$ at fixed $N$ gives $f-cf'$, where the prime denotes differentiation with respect to $c$ \cite{Olives2010}.

With the isotropic stress convention $\vct S_i^s=-p_i^s\vct P_i$, where $i$ identifies an interface, $\vct P_i$ is its tangent projector and $p_i^s=-\gamma_i$ is its surface pressure, substitution into Young's relation gives
\begin{equation}
p_{LG}^s\cos\theta=p_{SG}^s-p_{SL}^s.
\label{eq:young-surface-pressure}
\end{equation}

\subsection{Smooth migration and the reference-to-current measures}
\label{sec:smooth-migration}

Let the reference wet--dry boundary be $\mathcal C$, with unit tangent $\vct T_{\mathcal C}$ and unit reference co-normal $\vct N_{\mathcal C}$ pointing from $SL$ toward $SG$. Its current image is $\Gamma$, and $\vct n_\Gamma^S$ is the unit co-normal within the current solid surface, oriented toward increasing wet area. Co-normals are tangent to the surface and perpendicular to the boundary. A tangential reference perturbation $\vct X^\epsilon$ of a boundary point $\vct X$, parameterised by $\epsilon$, has signed reference co-normal displacement $\eta_0$:
\begin{equation}
\left.\frac{d\vct X^\epsilon}{d\epsilon}\right|_{\epsilon=0}
=\eta_0\vct N_{\mathcal C}.
\label{eq:config-variation}
\end{equation}
Assume a fixed smooth placement with the same boundary value of the surface deformation gradient on both sides of $\mathcal C$. Define the line stretch $j_\Gamma=|\vct F_s\vct T_{\mathcal C}|$ and the co-normal stretch factor $m_\Gamma=J_s/j_\Gamma$. The current co-normal component of $\vct F_s\vct N_{\mathcal C}$ equals $m_\Gamma$, since the transformed tangent and co-normal span a parallelogram of area $J_s$. Thus the signed current co-normal displacement $\eta$ and the current and reference line elements $\dd\ell$, $\dd\ell_0$ satisfy
\begin{equation}
\eta=m_\Gamma\eta_0,
\qquad \dd\ell=j_\Gamma\dd\ell_0,
\qquad \eta\,\dd\ell=J_s\eta_0\,\dd\ell_0.
\label{eq:migration-measure}
\end{equation}
Any current line-tangential component only reparametrizes the line and has zero pairing with the liquid--gas co-normal.

Let $\vct N_{LG}$ be the liquid--gas unit co-normal directed from the line into that interface. For the local potentials above, without retained line energy or an explicitly partition-dependent bulk/loading term, the migration contribution $\delta_a\mathcal E$ to the energy variation is
\begin{align}
\delta_a\mathcal E\big|_\Gamma
&=\int_{\mathcal C}
\left[\Psi_{SL}-\Psi_{SG}
-j_\Gamma\gamma_{LG}\vct N_{LG}\cdot\vct F_s\vct N_{\mathcal C}\right]
\eta_0\dd\ell_0 \nonumber\\
&=\int_\Gamma
\left[\gamma_{SL}-\gamma_{SG}
-\gamma_{LG}\vct N_{LG}\cdot\vct n_\Gamma^S\right]
\eta\dd\ell.
\label{eq:smooth-migration-variation}
\end{align}
The factor $j_\Gamma$ is required because the liquid--gas edge traction is per current line length. Its interior variation vanishes after constrained interface stationarity; a compatible extension of the endpoint variation is understood. With the sign convention of \eqref{eq:config-force},
\begin{equation}
f_\Gamma^{\Conf}
=\gamma_{SG}-\gamma_{SL}
+\gamma_{LG}\vct N_{LG}\cdot\vct n_\Gamma^S.
\label{eq:smooth-config-explicit}
\end{equation}
Each surface state uses its own equilibrium potential at the same prescribed deformation. On the flat wall, $\vct N_{LG}\cdot\vct n_\Gamma^S=-\cos\theta$.

The corresponding power pairing uses the line speed relative to solid material. If $\vct w_\Gamma$ is the geometric line velocity and $\vct v_S$ the common solid velocity trace, then
\begin{equation}
V_\Gamma^S=(\vct w_\Gamma-\vct v_S)\cdot\vct n_\Gamma^S.
\label{eq:relative-line-speed}
\end{equation}
At a ridge, the common-trace assumption fails; the compatible one-sided variations in \eqref{eq:corner-compatibility} replace the smooth-trace calculation. Inserting an arbitrary one-sided tip stretch into \eqref{eq:smooth-migration-variation} is not justified.

\subsection{Eliminated fields, contour resultants and corner boundary terms}

For each nearby partition position $a$, suppose the remaining fields $y$ lie on a differentiable stationary branch $y_*(a)$. These fields include the solid placement, liquid--gas shape and any equilibrated surface variables. Use a common-domain parametrization and the augmented functional $\mathcal L_{\rm aug}=\mathcal E+\sum_j\Lambda_j C_j(y,a)$, where $C_j=0$ are the attachment, volume and other constraints and $\Lambda_j$ their Lagrange multipliers. Denote the stationary collection of multipliers by $\Lambda_*(a)$. Differentiating $C_j(y_*(a),a)=0$ and using stationarity in $y$ gives
\begin{equation}
\frac{d\mathcal E_{\rm red}}{da}
=\left.\frac{\partial\mathcal L_{\rm aug}}{\partial a}
\right|_{y=y_*(a),\,\Lambda=\Lambda_*(a)},
\qquad \mathcal E_{\rm red}(a)=\mathcal E(y_*(a),a).
\label{eq:envelope-config}
\end{equation}
An explicitly $a$-dependent constraint contributes $\Lambda_j\partial_a C_j$. If constraints have been eliminated in an $a$-independent parametrization, the derivative reduces to $\partial_a\mathcal E$.

For the mechanical contour integral, let $C_\rho$ bound a current cross-sectional region of radius $\rho$ around a locally straight line, with $\vct n_c$ outward from that region. Write $\vct T_S$ for the solid Cauchy stress, $\dd s$ for current contour arclength, and $\vct f_\Gamma^{\rm ext}$ for any additional applied force per current line length. With bounded fluid bulk tractions, the line force balance has the form
\begin{equation}
\lim_{\rho\downarrow0}\int_{C_\rho\cap S}\vct T_S\vct n_c\,\dd s
+\vct\Upsilon_{SL}\vct N_{SL}
+\vct\Upsilon_{SG}\vct N_{SG}
+\gamma_{LG}\vct N_{LG}
+\vct f_\Gamma^{\rm ext}=\vct0.
\label{eq:mechanical-contour-balance}
\end{equation}
All sheet co-normals point from the line into their sheets. Singular fluid resultants or curvature forces from a true line stress, when retained, require extra terms.

For a bulk configurational contour integral, define the Eshelby tensor
\begin{equation}
\vct{\mathbb E}=W_S\vct I-\vct F_S^{\mathsf T}\vct P_S,
\label{eq:eshelby}
\end{equation}
Here $W_S$ is the bulk stored energy per reference volume, $\vct F_S$ the bulk deformation gradient, $\vct I$ the identity tensor and $\vct P_S$ the bulk first Piola stress, including the incompressibility multiplier where appropriate \cite{Eshelby1951,Gurtin1995,Gurtin2000}. The tensor $\vct{\mathbb E}$ maps reference directions to configurational force per reference area. Its contour and force directions are therefore referential.

The mechanical and configurational contour limits require separate estimates. In two dimensions, a singularity bound $|\vct T_S|=O(\rho^{-q})$ with exponent $q<1$ makes the mechanical resultant vanish as $O(\rho^{1-q})$. An analogous reference-coordinate bound on $\vct{\mathbb E}$ makes its resultant vanish. The stress bound alone does not control $\vct{\mathbb E}$ if $\vct F_S$ is unbounded. For an incompressible neo-Hookean solid, bounded $\vct F_S$ and logarithmic stress give both vanishing resultants, as in the corresponding nonlinear ridge class \cite{Pandey2020}.

The corner boundary calculation starts from the surface energy per unit unchanged transverse length,
\begin{equation}
\mathcal E_\Sigma
=\int_{X_-}^{a_0}\psi_{SL}(\lambda_{SL})\dd X
+\int_{a_0}^{X_+}\psi_{SG}(\lambda_{SG})\dd X
+\gamma_{LG}A_{LG}.
\label{eq:corner-energy}
\end{equation}
Here $X$ is reference surface arclength, $X_-$ and $X_+$ are fixed endpoints, $a_0$ is the material contact-line coordinate, and $A_{LG}$ is liquid--gas area per unit transverse length. The one-sided stretches and reference-area potentials are those in \eqref{eq:corner-conjugates}. Integration by parts gives wet and dry endpoint terms $\Upsilon_{SL}\vct t_{SL}\cdot\delta\vct r_{SL}$ and $-\Upsilon_{SG}\vct t_{SG}\cdot\delta\vct r_{SG}$, plus $(\psi_{SL}-\psi_{SG})\delta a_0$. Substituting \eqref{eq:corner-compatibility} and the liquid--gas endpoint term $-\gamma_{LG}\vct N_{LG}\cdot\delta\vct r_\Gamma$ gives \eqref{eq:corner-first-variation}. The selected bulk and loading variations must be added to this surface boundary identity.

Under a homogeneous relabeling $\widetilde X=cX$ with positive constant $c$, $\widetilde\lambda=\lambda/c$ and $\widetilde\psi(\widetilde\lambda)=\psi(c\widetilde\lambda)/c$. Tildes denote quantities expressed in the new reference coordinate. Consequently $\widetilde\Upsilon=\Upsilon$ and $\widetilde H=H/c$, preserving $H\,\delta a_0$.

\subsection{Evaluation of the illustrative surface law}

For \eqref{eq:benchmark-law}, direct differentiation gives
\begin{align}
\gamma_{S\alpha}(\lambda_s)
&=\gamma_{0,S\alpha}+b_{S\alpha}(1-\lambda_s^{-1})
+\frac{k_{S\alpha}}{2\lambda_s}(\lambda_s-1)^2,
\label{eq:benchmark-gamma}\\
\Upsilon_{S\alpha}(\lambda_s)
&=\gamma_{0,S\alpha}+b_{S\alpha}+k_{S\alpha}(\lambda_s-1),
\label{eq:benchmark-upsilon}\\
H_{S\alpha}(\lambda_s)
&=-b_{S\alpha}+\frac{k_{S\alpha}}2(1-\lambda_s^2).
\label{eq:benchmark-H}
\end{align}
The Hessian of $\psi(|\partial_X\vct r|)$ has longitudinal eigenvalue $k_{S\alpha}$ and transverse eigenvalue $\Upsilon_{S\alpha}/\lambda_s$. Thus tensile membrane behaviour requires $\Upsilon_{S\alpha}>0$ on the chosen stretch interval; $k_{S\alpha}\ge0$ alone does not exclude compression instability.

For an unpinned corner with zero bulk configurational resultant, \eqref{eq:corner-nopinning} becomes
\begin{equation}
b_{SL}+\frac{k_{SL}}2(\lambda_{SL}^2-1)
=b_{SG}+\frac{k_{SG}}2(\lambda_{SG}^2-1).
\label{eq:benchmark-nopinning}
\end{equation}
For example, $b_{SL}=b_{SG}=0$, $k_{SG}=2k_{SL}>0$ and $\lambda_{SL}=5/4$ require $\lambda_{SG}=\sqrt{41/32}$. This is a local compatibility calculation, not a bulk ridge solution or an identification of either tip stretch with the far-field stretch.

On the separately prescribed common flat placement, the effective tangential reaction at Young equilibrium is instead
\begin{equation}
R_t^{\rm eff}
=\frac{H_{SG}(\lambda_\infty)-H_{SL}(\lambda_\infty)}{\lambda_\infty}
=\frac{b_{SL}-b_{SG}}{\lambda_\infty}
+\frac{k_{SG}-k_{SL}}{2\lambda_\infty}(1-\lambda_\infty^2).
\label{eq:benchmark-reaction}
\end{equation}
The numerical values in \eqref{eq:benchmark-numbers} follow from $\lambda_\infty=5/4$, $b_{SL}=b_{SG}=0$, $\gamma_{0,SL}=40$, $\gamma_{0,SG}=50$, $k_{SL}=20$, $k_{SG}=40$ and $\gamma_{LG}=30$, with all energetic coefficients in $\mathrm{mN\,m^{-1}}$. The mechanical check is $-4.5+60-45-30(0.35)=0$; the configurational check is $51-40.5-30(0.35)=0$. At $\lambda_\infty=1$, $R_t^{\rm eff}=b_{SL}-b_{SG}$, so a tangential reaction does not require macroscopic prestretch.

\subsection{Derivative control and the fixed-wall endpoint variation}

In nondimensional units, let $\vct y$ be a vector of displacement coordinates, $\vct b(a)$ a differentiable loading function, $K>0$ a stiffness parameter and $\mathcal E_W(a)$ the fixed-wall energy. Consider the quadratic penalty energy
\begin{equation}
\mathcal E_K(\vct y,a)
=\frac K2|\vct y|^2-\vct b(a)\cdot\vct y+\mathcal E_W(a),
\qquad K>0.
\label{eq:penalty-energy}
\end{equation}
Exact elimination gives
\begin{equation}
\vct y_*=\frac{\vct b}{K},\qquad
\mathcal E_{{\rm red},K}=\mathcal E_W-\frac{|\vct b|^2}{2K},\qquad
\mathcal E_{{\rm red},K}'=\mathcal E_W'-\frac{\vct b\cdot\vct b'}{K}.
\label{eq:penalty-elimination}
\end{equation}
Here $\vct y_*$ minimises the energy at fixed $a$, and primes denote derivatives with respect to $a$. Uniform bounds on $\vct b$ and $\vct b'$ give derivative convergence at rate $K^{-1}$, while the restoring reaction $-K\vct y_*=-\vct b$ remains finite. The parameter $K$ belongs only to this nondimensional illustration and is distinct from the material surface moduli.

Uniform energy convergence alone does not imply convergence of its derivative. For dimensionless $a$ and a positive parameter $\epsilon\to0$, the functions $r_\epsilon(a)=\epsilon\sin(a/\epsilon)$ satisfy $|r_\epsilon(a)|\le\epsilon$ for every $a$, whereas $r_\epsilon'(a)=\cos(a/\epsilon)$ does not converge uniformly to zero on any interval of positive length. This counterexample concerns the implication between two mathematical convergence statements; it is not a constitutive model.

For a stationary continuum family, identify smooth migration fields on the varying and limiting surfaces. Assume fixed anchoring, convergent geometry on the fixed macroscopic scale $L$ chosen in the main text, convergent equilibrium surface states, reversible migration and no retained line energy. This geometric convergence concerns distances comparable with $L$ and large compared with the shrinking elastocapillary length $\ell_{\ec}$. Let $\Gamma_W$ be the limiting fixed-wall line, $\gamma_{S\alpha}^*$ the limiting surface potentials and $\theta_*$ the limiting contact angle on that scale. Define $\mathcal R_{\epsilon_{\ec}}[\eta]$ as the difference between the family first variation and the fixed-wall target first variation, with $\epsilon_{\ec}=\ell_{\ec}/L$ as in \eqref{eq:two-small-parameters}. Then, by definition and stationarity,
\begin{equation}
\begin{aligned}
0=\delta_a\mathcal E_{\rm red}^{\epsilon_{\ec}}[\eta]
={}&\int_{\Gamma_W}
\bigl(\gamma_{SL}^*-\gamma_{SG}^*+\gamma_{LG}\cos\theta_*\bigr)
\eta\dd\ell
+\mathcal R_{\epsilon_{\ec}}[\eta].
\end{aligned}
\label{eq:limit-variation-residual}
\end{equation}
This identity asserts no residual estimate. If $\mathcal R_{\epsilon_{\ec}}[\eta]\to0$ for every admissible test field, the limit satisfies Young's condition. Equation \eqref{eq:first-variation-convergence} is one sufficient reduced-coordinate hypothesis. If the residual instead converges to $\int_{\Gamma_W}r_*\eta\dd\ell$ for a nonzero line-force density $r_*$, the limiting coefficient contains $+r_*$; singular residuals require a weak formulation. A passage of global minimality is an alternative to derivative convergence, as in Appendix~\ref{sec:nonlinear-continuum}; it does not justify an arbitrary stationary branch.

Finally, the fixed-wall target itself has a direct constrained variation. Omit gravity, let $\vct n_\Sigma$ be the liquid--gas unit normal directed out of the liquid, and define $\kappa=\operatorname{div}_\Sigma\vct n_\Sigma$ as the sum of principal curvatures, positive for a spherical liquid drop. Let $\zeta$ be the normal virtual displacement and $V_L$ the liquid volume. The capillary surface potential is $\mathcal E_s=\gamma_{LG}A_{LG}+\gamma_{SL}A_{SL}+\gamma_{SG}A_{SG}$, where $A_i$ denotes the current area of interface $i$. Its volume multiplier $\Delta p$ is the liquid-minus-gas pressure difference at equilibrium. With the wall fixed,
\begin{equation}
\begin{aligned}
\delta(\mathcal E_s-\Delta p\,V_L)
={}&\int_{\Sigma_{LG}}(\gamma_{LG}\kappa-\Delta p)\zeta\dd A\\
&+\int_\Gamma
(\gamma_{SL}-\gamma_{SG}+\gamma_{LG}\cos\theta)\eta\dd\ell.
\end{aligned}
\label{eq:young-full-first-variation}
\end{equation}
Here $\delta V_L=\int_{\Sigma_{LG}}\zeta\dd A$; there is no additional line measure for a nondegenerate contact line moving tangentially on the fixed wall. Independent interior and endpoint variations give Young--Laplace and Young equilibrium.

\end{document}